\newcommand{\bra}[1]{{\left\langle{#1}\right\vert}}
\newcommand{\ket}[1]{{\left\vert{#1}\right\rangle}}
\newcommand{\qw}[1][-1]{\ar @{-} [0,#1]}
\newcommand{\qwx}[1][-1]{\ar @{-} [#1,0]}
\newcommand{\gate}[1]{*+<.6em>{#1} \POS ="i","i"+UR;"i"+UL **\dir{-};"i"+DL **\dir{-};"i"+DR **\dir{-};"i"+UR **\dir{-},"i" \qw}
\newcommand{\meter}{*=<1.8em,1.4em>{\xy ="j","j"-<.778em,.322em>;{"j"+<.778em,-.322em> \ellipse ur,_{}},"j"-<0em,.4em>;p+<.5em,.9em> **\dir{-},"j"+<2.2em,2.2em>*{},"j"-<2.2em,2.2em>*{} \endxy} \POS ="i","i"+UR;"i"+UL **\dir{-};"i"+DL **\dir{-};"i"+DR **\dir{-};"i"+UR **\dir{-},"i" \qw}
\newcommand{\control}{*!<0em,.025em>-=-<.2em>{\bullet}}
\newcommand{\ctrl}[1]{\control \qwx[#1] \qw}
\newcommand{\targ}{*+<.02em,.02em>{\xy ="i","i"-<.39em,0em>;"i"+<.39em,0em> **\dir{-}, "i"-<0em,.39em>;"i"+<0em,.39em> **\dir{-},"i"*\xycircle<.4em>{} \endxy} \qw}
\newcommand{\multigate}[2]{*+<1em,.9em>{\hphantom{#2}} \POS [0,0]="i",[0,0].[#1,0]="e",!C *{#2},"e"+UR;"e"+UL **\dir{-};"e"+DL **\dir{-};"e"+DR **\dir{-};"e"+UR **\dir{-},"i" \qw}
\newcommand{\ghost}[1]{*+<1em,.9em>{\hphantom{#1}} \qw}
\newcommand{\Qcircuit}{\xymatrix @*=<0em>}
\DeclareMathAlphabet\mathbfcal{OMS}{cmsy}{b}{n}
\tikzset{snake it/.style={decorate, decoration=snake}}
\tikzset{
    >=stealth',
    punkt/.style={
           rectangle,
           rounded corners,
           draw=black, very thick,
           text width=6.5em,
           minimum height=2em,
           text centered},
    pil/.style={
           ->,
           thick,
           shorten <=2pt,
           shorten >=2pt,},
  on each segment/.style={
    decorate,
    decoration={
      show path construction,
      moveto code={},
      lineto code={
        \path [#1]
        (\tikzinputsegmentfirst) -- (\tikzinputsegmentlast);
      },
      curveto code={
        \path [#1] (\tikzinputsegmentfirst)
        .. controls
        (\tikzinputsegmentsupporta) and (\tikzinputsegmentsupportb)
        ..
        (\tikzinputsegmentlast);
      },
      closepath code={
        \path [#1]
        (\tikzinputsegmentfirst) -- (\tikzinputsegmentlast);
      },
    },
  },
  mid arrow/.style={postaction={decorate,decoration={
        markings,
        mark=at position .5 with {\arrow[#1]{stealth'}}
      }}}
}
\mathchardef\mhyphen="2D
\newcommand{\forr}{\mathrm{forr}}
\newcommand{\coNP}{\mathsf{coNP}}
\newcommand{\IP}{\mathsf{IP}}
\newcommand{\PP}{\mathsf{PP}}
\newcommand{\QAM}{\mathsf{QAM}}
\newcommand{\AM}{\mathsf{AM}}
\newcommand{\QIP}{\mathsf{QIP}}
\newcommand{\QPP}{\mathsf{QPP}}
\newcommand{\QNP}{\mathsf{QNP}}
\newcommand{\coQNP}{\mathsf{coQNP}}
\newcommand{\HVSZK}{\mathsf{HVSZK}}
\newcommand{\HVQSZK}{\mathsf{HVQSZK}}
\newcommand{\R}{\mathsf{R}}
\newcommand{\Q}{\mathsf{Q}}
\newcommand{\QMAM}{\mathsf{QMAM}}
\newcommand{\CDS}{\textnormal{CDS}}
\newcommand{\CDQS}{\textnormal{CDQS}}
\newcommand{\cc}{\textit{cc}}
\newtheorem{theorem}{Theorem}
\newtheorem{corollary}[theorem]{Corollary}
\newtheorem{definition}[theorem]{Definition}
\newtheorem{lemma}[theorem]{Lemma}
\newtheorem{remark}[theorem]{Remark}
\newenvironment{proof}[1][Proof]{\noindent\textbf{#1. }}{\ \rule{0.5em}{0.5em}}
\begin{document}

\title{Comparing classical and quantum conditional disclosure of secrets}

\author[1]{Uma Girish}
\email{ug2150@columbia.edu}
\orcid{}

\author[2,3]{Alex May}
\email{amay@perimeterinstitute.ca}
\orcid{0000-0002-4030-5410}

\author[1]{Leo Orshansky}
\email{lo2433@columbia.edu}
\orcid{}

\author[2]{Chris Waddell}
\email{cwaddell@perimeterinstitute.ca}
\orcid{}

\affiliation[1]{Columbia University}
\affiliation[2]{Perimeter Institute for Theoretical Physics}
\affiliation[3]{Institute for Quantum Computing, Waterloo, Ontario}

\abstract{The conditional disclosure of secrets (CDS) setting is among the most basic primitives studied in information-theoretic cryptography.  
Motivated by a connection to non-local quantum computation and position-based cryptography, CDS with quantum resources has recently been considered. 
Here, we study the differences between quantum and classical CDS, with the aims of clarifying the power of quantum resources in information-theoretic cryptography.
We establish the following results:
\begin{itemize}
    \item We prove a $\Omega(\log \R_{0,A\rightarrow B}(f)+\log \R_{0,B\rightarrow A}(f))$ lower bound on quantum CDS where $\R_{0,A\rightarrow B}(f)$ is the classical one-way communication complexity with perfect correctness. 
    \item We prove a lower bound on quantum CDS in terms of two round, public coin, two-prover interactive proofs.
    \item For perfectly correct CDS, we give a separation for a promise version of the not-equals function, showing a quantum upper bound of $O(\log n)$ and classical lower bound of $\Omega(n)$. 
    \item We give a logarithmic upper bound for quantum CDS on forrelation, while the best known classical algorithm is linear. We interpret this as preliminary evidence that classical and quantum CDS are separated even with correctness and security error allowed. 
\end{itemize}
We also give a separation for classical and quantum private simultaneous message passing for a partial function, improving on an earlier relational separation.  
Our results use novel combinations of techniques from non-local quantum computation and communication complexity.}

\maketitle

\pagebreak

\tableofcontents

\flushbottom

\section{Introduction}

The conditional disclosure of secrets (CDS) setting \cite{gertner1998protecting} is among the simplest and best studied settings in information-theoretic cryptography. 
Classically, it has applications in attribute based encryption \cite{gay2015communication}, private information retrieval \cite{gertner1998protecting}, secret sharing \cite{applebaum2020power}, and has a number of connections to communication complexity \cite{applebaum2021placing}.
Recently, CDS has begun to be studied in the quantum setting. 
This first arose because of a connection between information-theoretic cryptography, including CDS, and non-local quantum computation \cite{allerstorfer2024relating}.
Quantum CDS also later appeared in the context of quantum gravity and the AdS/CFT correspondence \cite{may2024cryptographic}. 
Some basic properties of quantum CDS were established in \cite{asadi2024conditional}, including amplification, closure under constant depth formulas, and several lower bounds from communication complexity. 

The CDS scenario involves three parties, Alice, Bob and the referee. 
Alice receives input $x\in X=\{0,1\}^n$, Bob receives input $y\in Y=\{0,1\}^n$, and the referee knows both $x$ and $y$. 
Alice additionally holds a secret $s\in S$. 
An instance of CDS is specified by a choice of Boolean function $f:X\times Y\rightarrow \{0,1\}$. 
Alice and Bob can share randomness (in the classical case) or entanglement (in the quantum case). 
From their inputs and shared correlation, Alice and Bob each produce a message which they send simultaneously to the referee. 
Their goal is for the referee to be able to recover $s$ when $f(x,y)=1$, but not learn anything about $s$ when $f(x,y)=0$. 
This is illustrated in figure \ref{fig:CDSandCDQS} and defined formally in definitions \ref{def:CDS} and \ref{def:CDQS}. 
We use ``robust CDS'' to refer to settings in which we allow non-zero soundness and correctness error, ``perfectly secure'' or ``perfectly correct CDS'' for protocols lacking the respective type of error, and ``perfect CDS'' when there is neither type of error.

\begin{figure*}
    \centering
    \begin{subfigure}{0.45\textwidth}
    \centering
    \begin{tikzpicture}[scale=0.4]
    
    \draw[thick] (-5,-5) -- (-5,-3) -- (-3,-3) -- (-3,-5) -- (-5,-5);
    
    \draw[thick] (5,-5) -- (5,-3) -- (3,-3) -- (3,-5) -- (5,-5);
    
    \draw[thick] (5,5) -- (5,3) -- (3,3) -- (3,5) -- (5,5);
    
    \draw[thick, mid arrow] (4,-3) -- (4.5,3);
    
    \draw[thick, mid arrow] (-4,-3) to [out=90,in=-90] (3.5,3);
    
    \draw[thick,dashed] (-3.5,-5.5) -- (3.5,-5.5);
    \draw[black] plot [mark=*, mark size=3] coordinates{(-3.5,-5.5)};
    \draw[black] plot [mark=*, mark size=3] coordinates{(3.5,-5.5)};
    \node[below] at (0,-5.5) {$r$};
    
    \draw[thick] (-4.5,-6) -- (-4.5,-5);
    \node[below] at (-4.5,-6) {$x,s$};
    
    \draw[thick] (4.5,-6) -- (4.5,-5);
    \node[below] at (4.5,-6) {$y$};

    \node[left] at (0,1) {$m_A$};
    \node[right] at (4.5,0) {$m_B$};
    
    \draw[thick] (4,5) -- (4,6);
    \node[above] at (4,6) {$s$ iff $f(x,y)=1$};
    
    \end{tikzpicture}
    \caption{}
    \label{fig:CDQS}
    \end{subfigure}
    \hfill
    \begin{subfigure}{0.45\textwidth}
    \centering
    \begin{tikzpicture}[scale=0.4]
    
    \draw[thick] (-5,-5) -- (-5,-3) -- (-3,-3) -- (-3,-5) -- (-5,-5);
    
    \draw[thick] (5,-5) -- (5,-3) -- (3,-3) -- (3,-5) -- (5,-5);
    
    \draw[thick] (5,5) -- (5,3) -- (3,3) -- (3,5) -- (5,5);
    
    \draw[thick, mid arrow] (4,-3) -- (4.5,3);
    
    \draw[thick, mid arrow] (-4,-3) to [out=90,in=-90] (3.5,3);
    
    \draw[thick] (-3.5,-5) to [out=-90,in=-90] (3.5,-5);
    \draw[black] plot [mark=*, mark size=3] coordinates{(0,-7.05)};

    \node[left] at (0,1) {$M_A$};
    \node[right] at (4.5,0) {$M_B$};
    
    \draw[thick] (-4.5,-6) -- (-4.5,-5);
    \node[below] at (-4.5,-6) {$x, Q$};
    
    \draw[thick] (4.5,-6) -- (4.5,-5);
    \node[below] at (4.5,-6) {$y$};
    
    \draw[thick] (4,5) -- (4,6);
    \node[above] at (4,6) {Q iff $f(x,y)=1$};
    
    \end{tikzpicture}
    \caption{}
    \label{fig:PSQMintro}
    \end{subfigure}
    \caption{(a) A classical CDS protocol. Alice, on the lower left, holds input $x\in \{0,1\}^n$ and a secret $s$ from alphabet $S$. Bob, on the lower right, holds input $y\in \{0,1\}^n$. Alice and Bob can share a random string $r$. The referee, top right, holds $x$ and $y$. Alice sends a message $m_A(x,s,r)$ to the referee; Bob sends a message $m_B(y,r)$. The referee should learn $s$ iff $f(x,y)=1$ for some agreed on choice of Boolean function $f$. (b) A quantum CDS protocol. The secret can be a quantum system $Q$ or classical string $s$ (the two cases are equivalent, as noted in \cite{allerstorfer2024relating}). Alice and Bob can share an entangled quantum state, and send quantum messages to the referee. The referee should be able to recover the secret iff $f(x,y)=1$. Figure reproduced from \cite{asadi2024conditional}.} 
    \label{fig:CDSandCDQS}
\end{figure*}
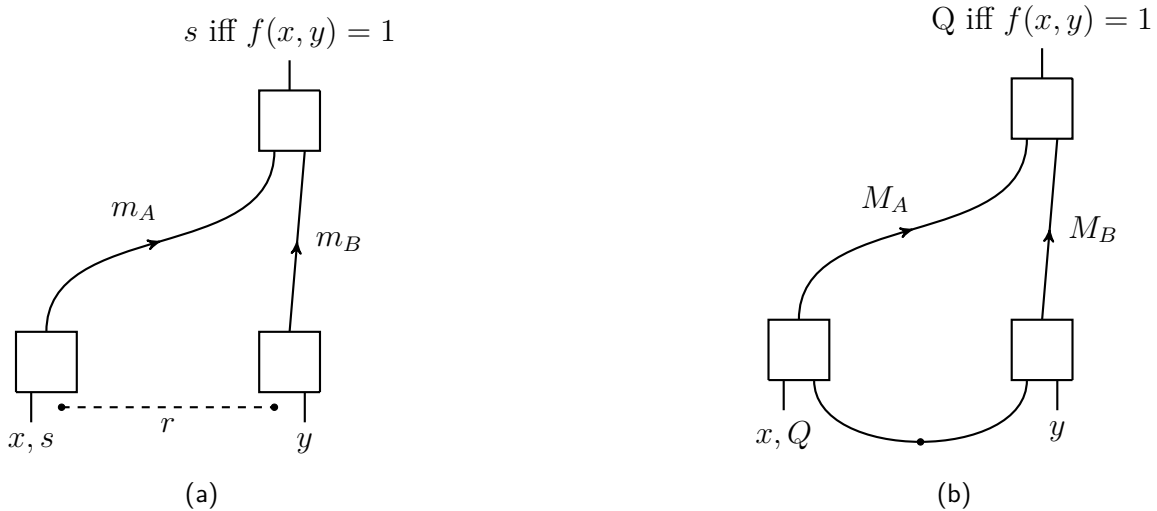

Here we further explore quantum CDS, with an emphasis on understanding the relationship between quantum and classical CDS.
We ask if quantum resources provide advantages in implementing CDS, and to what extent the same or analogous lower bounds apply to quantum CDS as to classical CDS. 

\subsection{Prior work}

Quantum CDS was introduced in \cite{allerstorfer2024relating}, where it was shown to be equivalent to $f$-routing, a class of non-local computations studied in the context of position-based cryptography \cite{kent2011quantum,buhrman2014position}. 
In an $f$-routing protocol Alice and Bob are given input strings $x\in X=\{0,1\}^n$ and $y\in Y= \{0,1\}^n$ respectively, and a quantum system $Q$ is held by Alice. 
Alice and Bob share a quantum state, and will act on their locally held systems to each produce two output systems. 
They each keep one of the output systems and send the other system to the other player. 
The desired functionality of the protocol is specified by a function $f:X\times Y\rightarrow \{0,1\}$: at the end of the protocol, Alice should hold $Q$ if $f(x,y)=0$ , while Bob should hold $Q$ if $f(x,y)=1$. 
Non-local quantum computation (NLQC, as first formalized by \cite{buhrman2014position}) can be seen as a generalization of this scenario where the inputs are fully quantum, and Alice and Bob should jointly enact a general quantum channel.  

The equivalence in \cite{allerstorfer2024relating} shows that an $f$-routing protocol for a function $f$ gives a quantum CDS protocol using essentially the same resources. 
Conversely, a quantum CDS protocol gives an $f$-routing protocol using similar resources, but where any random bits used in the CDS protocol become shared EPR pairs in the $f$-routing protocol. 
Further, classical CDS protocols imply quantum CDS protocols using similar resources. 
Taken together then, these results imply that upper bounds on classical CDS give upper bounds on quantum CDS and $f$-routing, and that lower bounds on quantum CDS or $f$-routing give lower bounds on classical CDS. 
One result that follows via these connections is a good upper bound on the communication and entanglement needed to perform quantum CDS for functions which can be computed by a quantum circuit in constant $T$-depth, which was known previously for $f$-routing \cite{speelman2015instantaneous}. 
We return to and use this result later. 

Beyond the immediate translation of results via the above implications, the connection between $f$-routing and classical CDS more broadly gives us a classical analogue and starting point for addressing open questions in non-local quantum computation. 
Indeed, NLQC is a poorly understood subject, with basic questions remaining unanswered despite considerable effort. 
Classical CDS provides a sometimes simpler starting point, and we can look for analogous properties or proofs in the case of NLQC. 
Aside from the interest in NLQC broadly, we are also interested in quantum CDS in particular. 
Indeed, we ask about the power of quantum resources in information-theoretic cryptography. 

\begin{table}
\centering
\begin{tabular}{|c|c|c|}
\hline
    & Classical & Quantum \\
   \hline
   perfectly secure & \makecell{ $\PP^{\cc}$} \cite{applebaum2021placing} & \makecell{$\QNP^{\cc}$ \cite{asadi2024rank},\\ $\PP^{\cc}$ \cite{asadi2024conditional}} \\
   \hline
   perfectly correct & $\coNP^{\cc}$ \cite{applebaum2021placing} & $\coQNP^{\cc}$ \cite{asadi2024rank} \\
   \hline
   robust & \makecell{$\log (\R_{A\rightarrow B})+\log(\R_{B\rightarrow A})$ \cite{gay2015communication} \\ $\IP[2]^{\cc}$ \cite{applebaum2021placing} \\$\HVSZK^{\cc}$\cite{applebaum2021placing} \\ $\AM^{\cc}$ \cite{applebaum2021placing}} & \makecell{ \textcolor{blue}{$\log(\R_{0,A\rightarrow B})+\log(\R_{0,B\rightarrow A})$}, $\log \Q^*_{A\rightarrow B}$ \cite{asadi2024conditional} \\ $\QIP[2]^{\cc}$ \cite{asadi2024conditional}\\ $\HVQSZK^{\cc}$ \cite{asadi2024conditional}\\ \textcolor{blue}{$\QAM[2,2]^{\cc}$} } \\
   \hline
\end{tabular}
\caption{Summary of lower bounds on quantum and classical CDS; those in black are from prior work. The bounds are closely analogous: three classical lower bounds are reproduced in the quantum setting, with the classical lower bounding class replaced by a quantum analogue of that class. We add to this analogy by adding two new lower bounds, shown in blue.}\label{fig:lowerboundstable}
\end{table}

As a first step in exploring these directions, \cite{asadi2024conditional} took up the systematic study of quantum CDS and established a number of basic results. 
In particular, they proved that the soundness and correctness parameters of quantum CDS can be efficiently amplified, that the cost of a CDS protocol doesn't grow too quickly when combining functions using small formulas, and began exploring lower bounds on quantum CDS. 
To do this, they worked from analogy with the classical setting, where a number of lower bounds have been proven based on measures of classical communication complexity.
These are summarized in table \ref{fig:lowerboundstable}, along with the quantum counterparts known so far. 
In particular, \cite{asadi2024conditional} proved:
\begin{itemize}
    \item A lower bound on robust CDS from quantum one-way communication complexity, mirroring the lower bound on classical CDS from classical one-way communication complexity.
    \item A lower bound on perfectly correct CDS from $\PP^{\cc}$ complexity, matching the classical lower bound for the same setting (that these bounds match is related to the fact that $\QPP^{\cc}$ $=$ $\PP^{\cc}$).
    \item A lower bound on robust CDS from two-message interactive quantum proofs, mirroring a lower bound from two-message interactive classical proofs in the classical setting. 
\end{itemize}
A missing part of the analogy was a lower bound on the classical setting from $\coNP^{\cc}$ complexity, for which \cite{asadi2024conditional} found no quantum analogue.
This was later established in \cite{asadi2024rank}, via a technique apparently unrelated to the classical one.
Intriguingly, this last lower bound also led to a new quantum lower bound on perfectly secure quantum CDS from $\QNP^{\cc}$. 
This also means classical perfectly secure CDS is lower bounded by $\QNP^{\cc}$ complexity, which represents a new insight into classical CDS. 

\subsection{Our results}

In this work, we further explore the analogies between quantum and classical CDS, with the goal of a deeper understanding of both the classical and quantum settings. 

Regarding the study of lower bounds, we first revisit the lower bound on quantum CDS from the one-way quantum communication complexity, and prove that this can be upgraded to a similar bound in terms of the classical one way communication complexity, 
\begin{align}
    \overline{\text{CDQS}}(f)=\tilde{\Omega}(\log (\R_{0,A\rightarrow B}(f))+\log (\R_{0,B\rightarrow A}(f))).
\end{align}
Here $\R_{0,A\rightarrow B}$ is the classical one-way (deterministic) communication complexity, $\overline{\text{CDQS}}(f)$ indicates the communication plus entanglement cost of quantum CDS for the function $f$, and $\tilde{\Omega}$ indicates we have ignored a dependence on $\log \overline{\text{CDQS}}(f)$. 
This essentially matches the lower bound on classical CDS \cite{gay2015communication}, 
\begin{align}
    \text{CDS}(f)=\Omega(\log (\R_{0,A\rightarrow B}(f))+\log(\R_{0,B\rightarrow A}(f))),
\end{align}
with the distinction that the bound is on the communication alone in the classical case. 
That these bounds (nearly) match is an indication that the lower bounds from one-way communication complexity on classical CDS are in some sense weak: the lower bound for classical CDS must not be fully exploiting the structure of a CDS protocol, since it already applies to a much broader class of protocols (those using quantum strategies). 

In the classical setting, the CDS complexity is lower bounded by \cite{applebaum2021placing}
\begin{align}
    \text{CDS}(f) =\Omega(\IP[2]^{\cc}(f)).
\end{align}
From this starting point, \cite{applebaum2021placing} adapts standard results on classical interactive proofs to the communication setting to transform this into a two-message public coin protocol and hence transform the above bound into
\begin{align}\label{eq:AMbound}
    \text{CDS}(f) \geq [\AM^{\cc}(f)]^\alpha - \text{polylog}(n)
\end{align}
where $\alpha$ is a constant. 
Following the analogous strategy in the quantum case, \cite{asadi2024conditional} obtained
\begin{align}
    \text{CDQS}(f) = \Omega(\QIP[2]^{\cc}(f)).
\end{align}
Continuing in analogy to the classical strategy however, when we apply standard transformations for quantum interactive proofs we obtain
\begin{align}
    \overline{\text{CDQS}}(f) = \Omega(\QMAM^{\cc}(f))
\end{align}
where a $\QMAM$ proof involves a message sent to the verifier from the prover, a single public coin sent to the prover, then a message sent to the verifier. 
Unfortunately, since $\QIP=\QIP[3]=\QMAM$ this is a \emph{weakening} of the bound from $\QIP[2]$. 

To obtain a non-trivial public coin bound, we look for a new reduction to a two round public coin protocol that doesn't take the reduction to $\text{QIP}[2]^{\cc}$ as its starting point. 
We find that such a reduction is possible, but at the expense of adding an additional prover, 
\begin{align}
    \overline{\text{CDQS}}(f) = \Omega(\QAM[2,2]^{\cc}(f))
\end{align}
where the right hand side denotes the communication cost of a two-message, two-prover, public coin proof. 
This bound appears to be the closest available quantum analogue to the classical lower bound from $\AM^{\cc}$. 
In the classical setting \cite{applebaum2021placing} pointed out that the lower bound $\text{CDS}(f) \geq \Omega([\AM^{\cc}(f)]^\alpha)$ reveals lower bounding classical CDS as a problem which is easier than but closely related to the long standing goal in communication complexity of finding lower bounds on $\AM^{\cc}$ complexity for explicit functions. 
Similarly, our lower bound suggests viewing lower bounds on quantum CDS as step towards lower bounds on $\QAM[2,2]^{\cc}$.

Aside from the study of lower bounds on quantum CDS, we also look for interesting upper bounds to establish quantum advantages.
Indeed we prove that quantum resources provide an advantage in some CDS settings. 
Concretely, considering perfectly correct CDS we prove a lower bound of $\Omega(n)$ for a promise version of the not-equals function in the classical setting, and an $O(\log n)$ upper bound using entanglement. 
The protocol is a variation of the standard strategy used to solve the Deutsch-Jozsa problem in the quantum communication complexity setting. 

In the robust setting (imperfect correctness and imperfect privacy), we have bounds that either 1) can be evaluated explicitly (those in terms of one-way communication complexity) but match between the classical and quantum cases or 2) bounds which may not match (those in terms of $\AM^{cc}$ or $\QAM[2,2]^{\cc}$ complexity) but cannot be evaluated explicitly, at least without breakthroughs in communication complexity.\footnote{In fact the classical lower bound can be framed in terms of $\AM^{\cc}\cap \mathsf{co}\AM^{\cc}$, which is the smallest class against which explicit bounds are not known in communication complexity.}
Consequently, we can't hope for unconditional quantum-classical separations in the robust setting given the current state of knowledge of classical lower bounds in communication complexity.
However, we explore the power of quantum resources in the robust setting by giving a $O(\log n)$ upper bound for the ``forrelation'' function \cite{AA15}, a partial function which has been important for establishing classical-quantum separations in other contexts. 
Since there is no known sub-linear classical upper bound for this function, this provides evidence for the power of quantum resources in robust CDS. 
The strategy for the protocol combines techniques from the non-local quantum computation literature with techniques from communication complexity. 
In particular, \cite{speelman2015instantaneous} showed how to do non-local computations with low T-depth efficiently, and \cite{girish2022quantum} proved classical-quantum communication separations in contexts where the quantum protocol has low complexity. 
Our protocol involves viewing the quantum CDS protocol as an instance of a non-local computation and implementing the low complexity protocol of \cite{girish2022quantum} using the low T-depth technique in \cite{speelman2015instantaneous}. 

\section{Background and tools}

\subsection{Some quantum information tools}

Let $\mathcal{D}(\mathcal{H}_A)$ be the set of density matrices on the Hilbert space $\mathcal{H}_A$. 
Given two density matrices $\rho$, $\sigma\in \mathcal{D}(\mathcal{H}_A)$,  define the fidelity,
\begin{align}
    F(\rho,\sigma) \equiv \left( \tr\left(\sqrt{\sqrt{\rho}\,\sigma\sqrt{\rho}}\right)\right)^2 \: ,
\end{align}
which is related to the one norm distance $\left\Vert\rho-\sigma\right\Vert_1$ by the Fuchs-van de Graaf inequalities, 
\begin{align}
    1- \sqrt{F(\rho,\sigma)} \leq \frac{1}{2}\left\Vert\rho-\sigma\right\Vert_1 \leq \sqrt{1-F(\rho,\sigma)} \: .
\end{align}
Next define the diamond norm distance, which is a distance measure on quantum channels. 
\begin{definition} Let $\mathbfcal{N}_{B\rightarrow C}, \mathbfcal{M}_{B\rightarrow C}: \mathcal{L}(\mathcal{H}_A)\rightarrow \mathcal{L}(\mathcal{H}_B)$ be quantum channels. 
The \textbf{diamond norm distance} is defined by 
\begin{align}
    \left\Vert \mathbfcal{N}_{B\rightarrow C}-\mathbfcal{M}_{B\rightarrow C}\right\Vert_\diamond = \sup_{d} \max_{\Psi_{A_dB}}\left\Vert \mathbfcal{N}_{B\rightarrow C}(\Psi_{A_dB}) - \mathbfcal{M}_{B\rightarrow C}(\Psi_{A_dB})\right\Vert_1
\end{align}
where $\Psi_{A_dB}\in \mathcal{D}(\mathcal{H}_{A_d}\otimes \mathcal{H}_B)$ and $\mathcal{H}_{A_d}$ is a $d$-dimensional Hilbert space. 
\end{definition}
The diamond norm distance has an operational interpretation in terms of the maximal probability of distinguishing quantum channels \cite{kitaev2002classical,wilde2013quantum}. 

From \cite{kretschmann2008continuity} we have the following theorem. 
\begin{theorem}\label{thm:operatorvsdiamond}
    For any two channels $\mathbfcal{T}_1$ and $\mathbfcal{T}_2$, 
\begin{align}\label{eq:opunderdiamond}
    \frac{\left\Vert\mathbfcal{T}_1-\mathbfcal{T}_2\right\Vert_\diamond}{\sqrt{\left\Vert\mathbfcal{T}_1\right\Vert_\diamond}+\sqrt{\left\Vert\mathbfcal{T}_2\right\Vert_\diamond}} \leq \inf_{\mathbf{V}_1,\mathbf{V}_2}\left\Vert\mathbf{V}_1-\mathbf{V}_2 \right\Vert_{\mathrm{op}} \leq \sqrt{\left\Vert\mathbfcal{T}_1-\mathbfcal{T}_2\right\Vert_{\diamond}} \: .
\end{align}
where the infimum is over isometric extensions of $\mathbfcal{T}_1$ and $\mathbfcal{T}_2$.
\end{theorem}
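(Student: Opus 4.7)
The plan is to prove the two inequalities separately. The first is a direct computation using the triangle and Hölder inequalities; the second is the continuity of Stinespring's representation due to Kretschmann, Schlingemann and Werner, and is where the real work lies.

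\textbf{Lower bound.} Fix any isometric extensions $\mathbf{V}_1, \mathbf{V}_2$, so that $\mathbfcal{T}_i(\rho) = \tr_E(\mathbf{V}_i \rho \mathbf{V}_i^\dagger)$. For any bipartite input $\Psi_{A_d B}$, I apply the telescoping identity
\begin{align}
\mathbf{V}_1 \Psi \mathbf{V}_1^\dagger - \mathbf{V}_2 \Psi \mathbf{V}_2^\dagger = (\mathbf{V}_1 - \mathbf{V}_2)\,\Psi\,\mathbf{V}_1^\dagger + \mathbf{V}_2\,\Psi\,(\mathbf{V}_1 - \mathbf{V}_2)^\dagger,
\end{align}
with the isometries acting on system $B$. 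Then the triangle inequality, the Hölder-type bound $\left\Vert AXB\right\Vert_1 \leq \left\Vert A\right\Vert_{\mathrm{op}} \left\Vert X\right\Vert_1 \left\Vert B\right\Vert_{\mathrm{op}}$, and contractivity of the trace norm under the partial trace over $E$ yield
\begin{align}
\left\Vert\mathbfcal{T}_1(\Psi) - \mathbfcal{T}_2(\Psi)\right\Vert_1 \leq \bigl(\left\Vert\mathbf{V}_1\right\Vert_{\mathrm{op}} + \left\Vert\mathbf{V}_2\right\Vert_{\mathrm{op}}\bigr)\,\left\Vert\mathbf{V}_1 - \mathbf{V}_2\right\Vert_{\mathrm{op}}.
\end{align}
Because $\left\Vert\mathbfcal{T}_i\right\Vert_\diamond = \left\Vert\mathbf{V}_i^\dagger \mathbf{V}_i\right\Vert_{\mathrm{op}} = \left\Vert\mathbf{V}_i\right\Vert_{\mathrm{op}}^2$, supremizing over inputs and infimizing over extensions gives the first inequality. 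Note that this step only uses that the denominators $\left\Vert\mathbf{V}_i\right\Vert_{\mathrm{op}}$ are determined by the channels, independently of which extension one picks.

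\textbf{Upper bound.} Here the goal is to exhibit isometric extensions achieving $\left\Vert\mathbf{V}_1 - \mathbf{V}_2\right\Vert_{\mathrm{op}} \leq \sqrt{\left\Vert\mathbfcal{T}_1 - \mathbfcal{T}_2\right\Vert_\diamond}$. My starting point is the variational expression
\begin{align}
\left\Vert\mathbf{V}_1 - \mathbf{V}_2\right\Vert_{\mathrm{op}}^2 = \sup_{\left\Vert\phi\right\Vert = 1} \bigl(2 - 2\,\mathrm{Re}\langle \phi | \mathbf{V}_1^\dagger \mathbf{V}_2 |\phi\rangle \bigr),
\end{align}
combined with Uhlmann's theorem and the Fuchs--van de Graaf inequality already stated in the excerpt. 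The strategy is: (i) use Uhlmann's theorem to exploit the residual unitary freedom in Stinespring's dilation and align the two output purifications, so that $\mathrm{Re}\langle \phi|\mathbf{V}_1^\dagger \mathbf{V}_2|\phi\rangle$ is as close as possible to $\sqrt{F(\mathbfcal{T}_1(|\phi\rangle\langle\phi|), \mathbfcal{T}_2(|\phi\rangle\langle\phi|))}$; (ii) use Fuchs--van de Graaf to conclude $1 - \sqrt{F} \leq \tfrac{1}{2}\left\Vert\mathbfcal{T}_1 - \mathbfcal{T}_2\right\Vert_\diamond$; (iii) promote the resulting pointwise bound to the operator-norm supremum, producing the stated square root.

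The main obstacle is step (iii): a single pair of isometries must be chosen whose output purifications are simultaneously near-optimally aligned for every input $|\phi\rangle$, not merely for one fixed input. This is the heart of the Kretschmann--Schlingemann--Werner argument and demands a minimax / convexity argument over the Stinespring extension freedom. Concretely, one enlarges the input space so that the optimization effectively runs over states that are maximally entangled with a reference system; the diamond-norm variational problem is then essentially saturated by such an input, and Uhlmann-optimal isometries for this single, highly entangled test state can be shown to remain globally close in operator norm, delivering the required bound.
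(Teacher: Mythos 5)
The paper does not prove this theorem; it is imported verbatim from Kretschmann, Schlingemann and Werner \cite{kretschmann2008continuity} (the continuity theorem for Stinespring's representation), so there is no in-paper proof to compare yours against. Judged on its own terms, your proposal correctly identifies both halves of the KSW argument, and the lower bound is essentially complete: the telescoping identity, the H\"older bound $\left\Vert AXB\right\Vert_1\leq\left\Vert A\right\Vert_{\mathrm{op}}\left\Vert X\right\Vert_1\left\Vert B\right\Vert_{\mathrm{op}}$, monotonicity of the trace norm under $\tr_E$, and the identity $\left\Vert\mathbfcal{T}_i\right\Vert_\diamond=\left\Vert\mathbf{V}_i^\dagger\mathbf{V}_i\right\Vert_{\mathrm{op}}$ are all valid (you should state explicitly that the isometries enter as $I_{A_d}\otimes\mathbf{V}_i$ and that $\left\Vert I\otimes(\mathbf{V}_1-\mathbf{V}_2)\right\Vert_{\mathrm{op}}=\left\Vert\mathbf{V}_1-\mathbf{V}_2\right\Vert_{\mathrm{op}}$, but that is cosmetic).

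The gap is in the upper bound, and you have named it yourself: step (iii) is not a technicality to be "demanded" but the entire mathematical content of the theorem, and your sketch does not execute it. The missing argument runs as follows. Embed both channels into a common dilation space $\mathcal{H}_B\otimes\mathcal{H}_E$ with $\mathcal{H}_E$ large enough that every Stinespring isometry of $\mathbfcal{T}_2$ has the form $(I_B\otimes U)\mathbf{V}_2$ for a contraction (indeed isometry) $U$ on $\mathcal{H}_E$. Then
\begin{align}
\inf_{U}\,\sup_{\rho}\,\Bigl(2-2\,\mathrm{Re}\,\tr\bigl(\rho\,\mathbf{V}_1^\dagger(I_B\otimes U)\mathbf{V}_2\bigr)\Bigr)
\end{align}
is the quantity to bound. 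The objective is affine in $U$ over the convex compact set of contractions and affine in $\rho$ over the density operators, so a minimax theorem (Ky Fan/Sion) lets you exchange $\inf$ and $\sup$; only after the exchange does Uhlmann's theorem apply, turning the inner infimum over $U$ at fixed $\rho$ into $2-2\sqrt{F}$ of the two channel outputs on a purification of $\rho$, after which Fuchs--van de Graaf and the definition of the diamond norm give $2-2\sqrt{F}\leq\left\Vert\mathbfcal{T}_1-\mathbfcal{T}_2\right\Vert_\diamond$. Your alternative suggestion --- that the supremum is "essentially saturated" by a single maximally entangled test state and that Uhlmann-optimal isometries for that one state remain globally close --- is not a proof and is not obviously repairable without the minimax step, since Uhlmann-optimality for one input does not by itself control $\mathrm{Re}\langle\phi|\mathbf{V}_1^\dagger\mathbf{V}_2|\phi\rangle$ for all other $\phi$. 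If you intend to supply a proof rather than cite \cite{kretschmann2008continuity} as the paper does, the minimax exchange must be carried out explicitly.
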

We will make use of the following remark. 
\begin{remark}
    For any two channels $\mathbfcal{T}_1$ and $\mathbfcal{T}_2$, we have that
    \begin{align}\label{eq:isometricextensionbound}
        \inf_{\mathbf{V}_1,\mathbf{V}_2}\left\Vert \mathbf{V}_1-\mathbf{V}_2\right\Vert_\diamond \leq 2\sqrt{\left\Vert \mathbfcal{T}_1-\mathbfcal{T}_2\right\Vert_\diamond}
    \end{align}
    where the infimum is over isometric extensions of the channels $\mathbfcal{T}_1$ and $\mathbfcal{T}_2$ labelled $\mathbf{V}_1$ and $\mathbf{V}_2$ respectively. 
\end{remark}
\begin{proof}
Consider the diamond norm $\left\Vert\mathbf{V}_1-\mathbf{V}_2\right\Vert_\diamond$ for any isometries $\mathbf{V}_1, \mathbf{V}_2$. 
Using the lower bound in equation \eqref{eq:opunderdiamond} and using that $\left\Vert\mathbf{V}_1\right\Vert_\diamond=\left\Vert\mathbf{V}_2\right\Vert_\diamond=1$, we have
\begin{align}\label{eq:stateprep1}
    \frac{1}{2}\left\Vert\mathbf{V}_1-\mathbf{V}_2\right\Vert_\diamond \leq \inf_{P_1, P_2}\left\Vert\mathbf{V}_1\otimes P_1-\mathbf{V}_2\otimes P_2 \right\Vert_{\mathrm{op}} 
\end{align}
where $P_1$ and $P_2$ are state preparation channels, and we have used that the only isometric extensions of isometries is to append a state preparation channel. 
Then, since taking the state preparation channels to be trivial is one possible choice of state preparation channel, we have
\begin{align}\label{eq:stateprep}
    \inf_{P_1, P_2}\left\Vert\mathbf{V}_1\otimes P_1-\mathbf{V}_2\otimes P_2 \right\Vert_{\mathrm{op}} \leq \left\Vert\mathbf{V}_1-\mathbf{V}_2\right\Vert_{\mathrm{op}} 
\end{align}
and hence, combining equations \eqref{eq:stateprep1} and \eqref{eq:stateprep}
\begin{align}
    \frac{1}{2}\left\Vert\mathbf{V}_1-\mathbf{V}_2\right\Vert_\diamond \leq \left\Vert\mathbf{V}_1-\mathbf{V}_2\right\Vert_{\mathrm{op}}\,.
\end{align}
Then since this was true for all isometries, we can combine it with the upper bound in \eqref{eq:opunderdiamond} to obtain
\begin{align}
    \frac{1}{2}\inf_{\mathbf{V}_1,\mathbf{V}_2}\left\Vert\mathbf{V}_1-\mathbf{V}_2\right\Vert_\diamond \leq \sqrt{\left\Vert\mathbfcal{T}_1-\mathbfcal{T}_2 \right\Vert_\diamond}
\end{align}
as needed. 
\end{proof}

Given a quantum channel $\mathbfcal{N}_{A\rightarrow B}$, we define a complementary channel $(\mathbfcal{N})^c_{A\rightarrow C}$ as any channel such that there exists an isometry $\mathbf{V}_{A\rightarrow BC}$ such that
\begin{align}
    \mathbfcal{N}_{A\rightarrow B} (\cdot)&= \tr_C(\mathbf{V}_{A\rightarrow BC} \,(\cdot) \mathbf{V}^\dagger_{A\rightarrow BC}) \nonumber \\
    (\mathbfcal{N})^c_{A\rightarrow C}(\cdot) &= \tr_B(\mathbf{V}_{A\rightarrow BC}  \,(\cdot)\mathbf{V}^\dagger_{A\rightarrow BC}) \: .
\end{align}

We will use the following bound from \cite{afham2022quantum}. 
Suppose we have an ensemble of mixed states, $\{(p_i, \rho_i)\}$.
Then
\begin{align}\label{eq:handyFbound}
    \max_{\sigma} \sum_i p_i \sqrt{F(\sigma, \rho_i)} \leq \sqrt{\sum_{i,j}p_ip_j \sqrt{F(\rho_i,\rho_j)}}.
\end{align}
In words, if the $\rho_i$ in the ensemble are very different then we can't choose a $\sigma$ that is close to all of them at once. 

\subsection{Definition of CDS and some basic properties}

We begin by defining the classical CDS setting. 
\begin{definition}\label{def:CDS}
    A \textbf{conditional disclosure of secrets (CDS)} task with classical resources is defined by a choice of function $f:\{0,1\}^{2n}\rightarrow \{0,1\}$.
    The scheme involves input $x\in \{0,1\}^{n}$ given to Alice and input $y\in \{0,1\}^{n}$ given to Bob.
    Alice and Bob share a random string $r\in R$.
    Additionally, Alice holds a string $s$ drawn from a distribution $S$, which we call the secret. 
    Alice sends message $m_A(x,s,r)\in M_A$ to the referee, and Bob sends message $m_B(y,r)\in M_B$.  
    We require the following two conditions on a CDS protocol. 
    \begin{itemize}
        \item $\epsilon$\textbf{-correct:} There exists a decoding function $D(m_A,x,m_B,y)$ such that 
        \begin{align}
            \forall s\in S,\,\forall \,(x,y) \in X\times Y \,\,s.t.\,\,f(x,y)=1,\,\,\, \underset{r\leftarrow R}{\mathrm{Pr}}[D(m_A,x,m_B,y)=s] \geq 1-\epsilon \: .
        \end{align}
        \item $\delta$\textbf{-secure:} There exists a simulator producing a distribution $Sim$ taking on values in $M=M_AM_B$ such that
        \begin{align}
            \forall s\in S,\,\forall \,(x,y) \in X\times Y \,\,s.t.\,\, f(x,y)=0,\,\,\, \left\Vert  Sim_{M|xy} - P_{M|xys} \right\Vert _1\leq \delta \: .
        \end{align}
    \end{itemize}
\end{definition}

We define the communication cost of a CDS protocol to be
\begin{align}
    t = \log |M_A| + \log |M_B| \: ,
\end{align}
where the logarithms are always taken base 2. 
For messages encoded into bits, this is the total number of bits of communication from Alice and Bob to the referee in a given protocol. 
Specifically, we maximize $t$ over choices of input $x,y$. 
The minimal communication cost for a function $f$ that achieves $\epsilon$-correctness and $\delta$-security we denote by $\text{CDS}_{\epsilon, \delta}(f)$. 
We denote the minimal number of shared random bits needed by $\overline{\text{CDS}}_{\epsilon, \delta}(f)$.
We will also use shorthand $\text{CDS}(f) = \text{CDS}_{0.09, 0.09}(f)$, $\overline{\text{CDS}}(f) = \overline{\text{CDS}}_{0.09, 0.09}(f)$.\footnote{As we comment below, the choice of default errors $\epsilon = \delta = 0.09$ is motivated by the values for which an amplification result can be shown in the quantum setting.} 

By default, we will assume single-bit secrets ($\mathrm{supp}(S)=\{0,1\}$) when discussing CDS complexity. 
Note that this is essentially without loss of generality: \cite{applebaum2021conditional} characterize the relationship between $\CDS(f)$ with single-bit and multi-bit secrets for a given $f$. 
Namely, that a protocol for single-bit secrets can be extended to $k$-bit secrets with an $O(k)$ multiplicative overhead to communication and randomness complexity; this transformation furthermore amplifies both privacy and correctness exponentially in $k$ (\cite{applebaum2021conditional}, theorem 2.3).

We will be especially interested in comparing properties of classical CDS with properties of quantum CDS, which we define next. 
\begin{definition}\label{def:CDQS}
    A \textbf{conditional disclosure of secrets task with quantum resources (CDQS)} is defined by a choice of function $f:\{0,1\}^{2n}\rightarrow \{0,1\}$, and a $d_Q$-dimensional Hilbert space $\mathcal{H}_Q$ which holds the secret.
     The task involves inputs $x\in \{0,1\}^{n}$ and system $Q$ given to Alice, and input $y\in \{0,1\}^{n}$ given to Bob.
    Alice sends message system $M_A$ to the referee, and Bob sends message system $M_B$. 
    Alice and Bob share a resource state $\Psi_{LR}$ with $L$ held by Alice and $R$ held by Bob. 
    Label the combined message systems as $M=M_AM_B$.
    Label the quantum channel defined by Alice and Bob's combined actions $\mathbfcal{N}_{Q\rightarrow M}^{x,y}$.
    We put the following two conditions on a CDQS protocol. 
    \begin{itemize}
        \item $\epsilon$\textbf{-correct:} There exists a channel $\mathbfcal{D}^{x,y}_{M\rightarrow Q}$, called the decoder, such that
        \begin{align}
            \forall (x,y)\in X\times Y \,\,\, s.t. \,\, f(x,y)=1,\,\,\, \left\Vert\mathbfcal{D}^{x,y}_{M\rightarrow Q}\circ \mathbfcal{N}^{x,y}_{Q\rightarrow M} - \mathbfcal{I}_{Q\rightarrow Q}\right\Vert_\diamond \leq \epsilon \: .
        \end{align}
        \item $\delta$\textbf{-secure:} There exists a quantum channel $\mathbfcal{S}_{\varnothing \rightarrow M}^{x,y}$, called the simulator, such that
        \begin{align}
            \forall (x,y)\in X\times Y \,\,\, s.t. \,\, f(x,y)=0,\,\,\, \left\Vert \mathbfcal{S}_{\varnothing \rightarrow M}^{x,y} \circ \tr_Q - \mathbfcal{N}_{Q\rightarrow M}^{x,y}\right\Vert_\diamond \leq \delta \: .
        \end{align}
    \end{itemize}
\end{definition}
We will take the Hilbert space $Q$ to be 2 dimensional throughout this work. 
The communication pattern of a CDQS protocol is shown in figure \ref{fig:CDQSprotocol}. 
We define the communication cost of a CDQS protocol to be
\begin{align}
    t = \log \text{dim}(M_A) + \log \text{dim}(M_B) \: .
\end{align}
For qubit systems this is the total number of qubits of communication from Alice and Bob to the referee.
We maximize the above over choices of input $x,y$. 
The minimal communication cost for a function $f$ that achieves $\epsilon$-correctness and $\delta$-security we denote by $\text{CDQS}_{\epsilon, \delta}(f)$. 
We denote the minimal number of qubits in the shared resource system plus the qubits of message by $\overline{\text{CDQS}}_{\epsilon, \delta}(f)$.\footnote{Notice that this notation differs from the classical case, where the overline indicates just the randomness. In the classical case the randomness lower bounds the communication \cite{applebaum2021placing}, while in the quantum case we don't know a similar statement. This discrepancy leads to the different notations being natural in the two contexts.}
We will also use $\text{CDQS}(f) = \text{CDQS}_{0.09, 0.09}(f)$, $\overline{\text{CDQS}}(f) = \overline{\text{CDQS}}_{0.09, 0.09}(f)$.

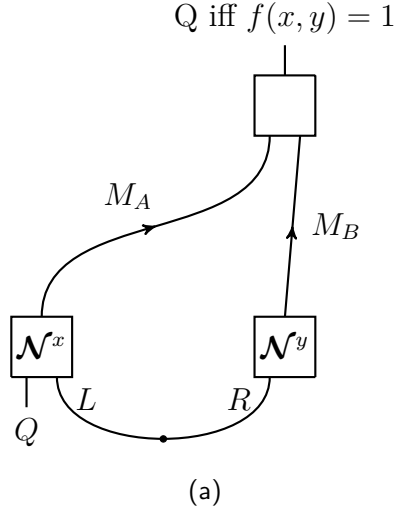
\begin{figure*}
    \centering
    \begin{subfigure}{0.45\textwidth}
    \centering
        \begin{tikzpicture}[scale=0.4]
    
    \draw[thick] (-5,-5) -- (-5,-3) -- (-3,-3) -- (-3,-5) -- (-5,-5);
    \node at (-4,-4) {$\mathbfcal{N}^x$};
    
    \draw[thick] (5,-5) -- (5,-3) -- (3,-3) -- (3,-5) -- (5,-5);
    \node at (4,-4) {$\mathbfcal{N}^y$};
    
    \draw[thick] (5,5) -- (5,3) -- (3,3) -- (3,5) -- (5,5);
    
    \draw[thick, mid arrow] (4,-3) -- (4.5,3);
    
    \draw[thick, mid arrow] (-4,-3) to [out=90,in=-90] (3.5,3);
    
    \draw[thick] (-3.5,-5) to [out=-90,in=-90] (3.5,-5);
    \draw[black] plot [mark=*, mark size=3] coordinates{(0,-7.05)};
    \node[below right] at (-3.25,-5) {$L$};
    \node[below left] at (3.25,-5) {$R$};

    \node[left] at (0,1) {$M_A$};
    \node[right] at (4.5,0) {$M_B$};
    
    \draw[thick] (-4.5,-6) -- (-4.5,-5);
    \node[below] at (-4.5,-6) {$Q$};
    
    \draw[thick] (4,5) -- (4,6);
    \node[above] at (4,6) {Q iff $f(x,y)=1$};
    
    \end{tikzpicture}
    \caption{}
    \end{subfigure}
    \caption{A CDQS protocol, with system labels and location of each quantum operation. The density matrix on $M_AM_B$ we refer to as the mid-protocol density matrix. We sometimes combine the actions of Alice and Bob to define $\mathbfcal{N}^{x,y}_{Q\rightarrow M}=\mathbfcal{N}^x_{AL\rightarrow M_A}\otimes\mathbfcal{N}^y_{R\rightarrow M_B}$.} 
    \label{fig:CDQSprotocol}
\end{figure*}

Note that for quantum CDS, whenever $\epsilon, \delta\leq 0.09$ we can amplify and achieve parameters $\epsilon'=\epsilon 2^{-k}$, $\delta'=\delta 2^{-k}$ using an overhead in communication and entanglement of a factor of $k$.
More precisely, we have the following theorem from \cite{asadi2024conditional}:
\begin{theorem}\label{thm:amplify}
    Let $F_Q$ be a $\CDQS$ protocol for a function $f$ that supports one qubit  secrets with correctness error $\delta=0.09$ and privacy error $\epsilon=0.09$, has communication cost $c$, and entanglement cost $E$. 
    Then for every integer $k$, there exists a $\CDQS$ protocol $G_Q$ for $f$ with $k$-qubit secrets, privacy and correctness errors of $2^{-\Omega(k)}$, and communication and entanglement complexity of size $O(k c)$ and $O(k E)$, respectively. 
\end{theorem}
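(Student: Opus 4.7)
The plan is to encode the $k$-qubit secret using a combined quantum error-correcting and quantum secret-sharing code, then run the base protocol $F_Q$ in parallel on each encoded physical qubit. Fix an $[[n,k,d]]$ quantum stabilizer code (for example, a concatenated CSS code) with $n=O(k)$ and $d=\Omega(k)$, chosen so that both the code and its dual have large distance. Alice encodes her $k$-qubit secret $Q$ into the $n$-qubit codeword, and for each physical qubit $Q_i$ she and Bob execute an independent copy of $F_Q$ with fresh entanglement. The total communication cost is then $O(nc)=O(kc)$ and the entanglement cost is $O(nE)=O(kE)$, matching the claim.

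For correctness amplification, the amplified decoder first runs the $n$ base decoders $\mathbfcal{D}^{(i)}$ in parallel, then applies the code's recovery map. Since each base decoder satisfies $\|\mathbfcal{D}^{(i)}\circ\mathbfcal{N}^{(i)}-\mathbfcal{I}\|_\diamond\le\epsilon$, a Pauli twirl of each physical qubit---achievable by having Alice and the decoder share classical randomness, which can be included in the message without compromising privacy---converts each effective per-qubit channel into a stochastic Pauli channel with non-identity weight $O(\epsilon)\ll 1$. A Chernoff bound on the total number of Pauli errors, combined with the code's distance $d=\Omega(k)$, ensures successful error correction except with probability $2^{-\Omega(k)}$, bounding the overall correctness diamond-norm error.

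For privacy amplification, the amplified simulator runs the $n$ base simulators $\mathbfcal{S}^{(i)}$ in parallel on the encoded state. A naive hybrid bound gives only $n\delta$, which grows with $k$, so we need to exploit the secret-sharing property. Write $\mathbfcal{N}^{(i)}=(1-\delta)\mathbfcal{S}^{(i)}+\delta\,\mathbfcal{L}^{(i)}$ for some residual channel $\mathbfcal{L}^{(i)}$; this decomposition is possible precisely because $\|\mathbfcal{N}^{(i)}-\mathbfcal{S}^{(i)}\|_\diamond\le\delta$. Sample independently for each copy whether the execution follows $\mathbfcal{S}^{(i)}$ or $\mathbfcal{L}^{(i)}$; conditioned on the set $S$ of ``leaking'' copies, the logical secret remains hidden whenever $|S|$ is below the code's dual distance $\Omega(k)$. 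A Chernoff bound shows $|S|>d/2$ occurs only with probability $2^{-\Omega(k)}$, which bounds the overall privacy diamond-norm error by $2^{-\Omega(k)}$.

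The main obstacle is the privacy step. The correctness argument reduces cleanly to a Chernoff bound on stochastic Pauli errors after twirling, which is standard. The privacy argument, however, requires converting a diamond-norm bound into a probabilistic mixture with a residual channel, and then composing these decompositions across all $n$ copies in a way that the code's secret-sharing property shields the logical secret against the worst-case joint residual. Handling the correlations in the joint residual, together with verifying that the chosen code's privacy threshold is at least the constant fraction forced by $\delta=0.09$, is where the delicate work lies; the choice of threshold $0.09$ for the base errors is essentially calibrated to stay below this combined code threshold.
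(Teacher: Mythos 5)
First, a point of reference: the paper does not prove this statement. Theorem \ref{thm:amplify} is imported verbatim from \cite{asadi2024conditional} (``we have the following theorem from...''), so there is no in-paper proof to compare against. Judged on its own terms, your high-level strategy---encode the $k$-qubit secret in a constant-rate, linear-distance stabilizer code, run the single-qubit protocol independently on each physical qubit, and use the code's distance for correctness and its cleaning/privacy property for security---is a plausible route, and the correctness half is essentially sound: independent Pauli twirls convert each per-copy diamond-norm error into independent stochastic Pauli noise of non-identity weight roughly $\epsilon/2 = 0.045$, and a Chernoff bound against a code of relative distance above $0.09$ (which exists by the quantum Gilbert--Varshamov bound) gives $2^{-\Omega(k)}$ decoding failure. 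One should verify that revealing the twirling Paulis in Alice's message preserves the simulator guarantee, but that is routine since they are uniform and input-independent.

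The genuine gap is the decomposition $\mathbfcal{N}^{(i)}=(1-\delta)\mathbfcal{S}^{(i)}+\delta\,\mathbfcal{L}^{(i)}$, which you assert ``is possible precisely because $\left\Vert\mathbfcal{N}^{(i)}-\mathbfcal{S}^{(i)}\right\Vert_\diamond\le\delta$.'' That implication is false, already for classical distributions: total variation distance $\delta$ between $p$ and $q$ does not yield $p=(1-\delta)q+\delta \ell$ for a distribution $\ell$ (take $q=(1/2,1/2)$ and $p=(1/2-\delta,1/2+\delta)$; then $p-(1-\delta)q$ has a negative entry). What small distance gives is a maximal coupling $p=(1-\delta)r+\delta p'$, $q=(1-\delta)r+\delta q'$ with a \emph{common} component $r$, and even that form has no clean analogue for quantum channels under the diamond norm, since $\mathbfcal{N}^{(i)}-(1-\delta)\mathbfcal{S}^{(i)}\circ\tr_Q$ need not be completely positive. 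Because the entire privacy amplification rests on sampling which copies ``follow the simulator,'' and that sampling is only meaningful given such a convex decomposition, the privacy half does not go through as written; the correlations in the joint residual that you flag as ``delicate'' are downstream of a decomposition that does not exist in the first place. A repair needs a different handle on privacy: for instance, the complementary-channel/decoupling characterization of security (cf.\ Lemma \ref{lemma:complementaryCDS}), or a direct bound on the distinguishability of the referee's joint view for two logical secrets that combines the code's cleaning property with a fidelity-based, rather than mixture-based, version of the per-copy guarantee.
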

Note that in this result we increase the size of the secret while simultaneously amplifying correctness and security. 

We will also make use of the following result about CDQS. 
\begin{lemma}\label{lemma:complementaryCDS}
    Suppose that a CDQS protocol is $\delta$ secure, and denote the encoding map by $\mathbfcal{N}_{Q\rightarrow M}^{x,y}$. 
    Then for $(x,y)\in f^{-1}(0)$ there exists a decoding map $\mathbfcal{D}^{x,y}$ such that
    \begin{align}
        \left\Vert\mathbfcal{D}^{x,y}_{M'\rightarrow Q}\circ (\mathbfcal{N}^{x,y})^c_{Q\rightarrow M'} - \mathbfcal{I}_Q \right\Vert_\diamond \leq 2\sqrt{\delta} \: .
    \end{align}
\end{lemma}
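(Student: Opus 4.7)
The plan is to exploit the security guarantee via Stinespring dilations, showing that if the encoding channel $\mathbfcal{N}^{x,y}_{Q\to M}$ is close in diamond norm to a channel that ignores $Q$, then the environment system (i.e., the output of the complementary channel) must carry essentially all the information about $Q$, so there is a decoding map from it that approximately recovers $Q$.

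First I would invoke the $\delta$-security condition to obtain a simulator channel $\mathbfcal{S}^{x,y}_{\varnothing\to M}$ such that
\begin{align}
    \left\Vert \mathbfcal{N}^{x,y}_{Q\to M} - \mathbfcal{S}^{x,y}_{\varnothing\to M}\circ \tr_Q\right\Vert_\diamond \leq \delta.
\end{align}
Let $\mathbfcal{T}_1 = \mathbfcal{N}^{x,y}_{Q\to M}$ and $\mathbfcal{T}_2 = \mathbfcal{S}^{x,y}_{\varnothing\to M}\circ \tr_Q$. By the remark following Theorem \ref{thm:operatorvsdiamond} (the bound in equation \eqref{eq:isometricextensionbound}), there exist isometric extensions $\mathbf{V}_1, \mathbf{V}_2: Q \to M\otimes E$ (for a common environment $E$, after padding if necessary) with
\begin{align}
    \left\Vert \mathbf{V}_1 - \mathbf{V}_2 \right\Vert_\diamond \leq 2\sqrt{\delta}.
\end{align}
The corresponding complementary channels are $(\mathbfcal{T}_i)^c = \tr_M\circ\mathbf{V}_i(\cdot)\mathbf{V}_i^\dagger$.

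The crucial observation is that $\mathbfcal{T}_2 = \mathbfcal{S}^{x,y}\circ\tr_Q$ admits a canonical isometric extension of product form: an isometric extension of $\tr_Q$ is the identity map $Q\to Q'$ that simply relabels the input as an environment register $Q'$, while an isometric extension of the state preparation $\mathbfcal{S}^{x,y}$ is some isometry $\mathbf{W}: \varnothing \to M E'$. Taking $\mathbf{V}_2 = \mathbf{W}\otimes \mathbf{1}_{Q\to Q'}$ (so $E = Q'E'$), the complementary channel is
\begin{align}
    (\mathbfcal{T}_2)^c(\rho_Q) = \sigma_{E'}\otimes \rho_{Q'}
\end{align}
for a fixed state $\sigma_{E'}$. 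In particular, tracing out $E'$ and relabeling $Q'\to Q$ gives a channel $\mathbfcal{D}^{x,y}_{M'\to Q}$ with $\mathbfcal{D}^{x,y}\circ (\mathbfcal{T}_2)^c = \mathbfcal{I}_Q$ exactly.

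Finally, combine these ingredients using monotonicity of the diamond norm under the partial trace and under composition with $\mathbfcal{D}^{x,y}$:
\begin{align}
    \left\Vert (\mathbfcal{T}_1)^c - (\mathbfcal{T}_2)^c\right\Vert_\diamond \leq \left\Vert \mathbf{V}_1(\cdot)\mathbf{V}_1^\dagger - \mathbf{V}_2(\cdot)\mathbf{V}_2^\dagger\right\Vert_\diamond = \left\Vert \mathbf{V}_1 - \mathbf{V}_2\right\Vert_\diamond \leq 2\sqrt{\delta},
\end{align}
and hence
\begin{align}
    \left\Vert \mathbfcal{D}^{x,y}\circ (\mathbfcal{N}^{x,y})^c_{Q\to M'} - \mathbfcal{I}_Q\right\Vert_\diamond = \left\Vert \mathbfcal{D}^{x,y}\circ (\mathbfcal{T}_1)^c - \mathbfcal{D}^{x,y}\circ (\mathbfcal{T}_2)^c\right\Vert_\diamond \leq 2\sqrt{\delta},
\end{align}
as required. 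The main subtlety to get right is the choice of isometric extensions: the remark only guarantees \emph{some} pair of extensions are diamond-close, whereas the complementary channel of $\mathbfcal{N}^{x,y}$ is defined only up to an isometry on $E$. This is harmless because any such residual isometry can be absorbed into $\mathbfcal{D}^{x,y}$, so the lemma is stated (and proved) for a particular valid choice of complementary channel.
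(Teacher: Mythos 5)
Your proof is correct, and it follows the same route the paper intends: the paper does not prove Lemma \ref{lemma:complementaryCDS} itself but defers to Theorem 23 of \cite{allerstorfer2024relating}, whose argument is exactly the Stinespring-continuity/decoupling reasoning you give, using the tools the paper sets up for this purpose (Theorem \ref{thm:operatorvsdiamond} and the bound \eqref{eq:isometricextensionbound}). Your key observation --- that $\mathbfcal{S}^{x,y}\circ\tr_Q$ has a product-form dilation whose complementary channel carries $\rho_Q$ intact, so the near-optimal extension of $\mathbfcal{N}^{x,y}$ inherits an exact decoder up to a residual environment isometry absorbed into $\mathbfcal{D}^{x,y}$ --- is the substance of that proof, and your closing remark correctly handles the only real subtlety (the ambiguity of the complementary channel up to an isometry on the environment).
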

This lemma follows from the proof of theorem 23 in \cite{allerstorfer2024relating}. 
Briefly, the intuition behind this result comes from the decoupling theorem: if quantum information is not recoverable from the channel $\mathbfcal{N}^{x,y}_{Q\rightarrow M}$, then, since information is not destroyed in quantum mechanics, it must be recoverable from the environment system and hence from the output of any complementary channel. 

CDS is related to another primitive known as private simultaneous message passing, which we define as follows. 
\begin{definition}\label{def:PSM}
    A \textbf{private simultaneous message (PSM)} task is defined by a choice of function $f:X\times Y\rightarrow Z$. 
    The inputs to the task are $n$ bit strings $x\in X$ and $y\in Y$ given to Alice and Bob, respectively. 
    Alice then sends a message $m_A(x,r)$ to the referee, and Bob sends message $m_B(y,r)$. 
    From these inputs, the referee prepares an output bit $z\in Z$. 
    We require the task be completed in a way that satisfies the following two properties. 
    \begin{itemize}
        \item \textbf{$\epsilon$-correctness:} There exists a decoder $Dec$ such that \begin{align}
            \forall (x,y)\in X\times Y,\, \,\,\,\mathrm{Pr}[Dec(m_A,m_B)=f(x,y)] \geq 1-\epsilon \: .
        \end{align}
        \item \textbf{$\delta$-security:} There exists a simulator producing a distribution $Sim$ taking on values in $M=M_AM_B$, such that 
        \begin{align}
            \forall (x,y)\in X\times Y,\, \,\,\,\left\Vert Sim_{M|f(x,y)} - P_{M|xy}\right\Vert_1\leq \delta \: .
        \end{align}
        Stated differently, the distribution of the message systems is $\delta$-close to one that depends only on the function value, for every choice of $x,y$. 
    \end{itemize}
\end{definition}
PSM is a stronger primitive than CDS in that a PSM protocol for a function $f$ implies the existence of a CDS protocol for the same function with similar efficiency. 
This transformation is given, for example, as theorem 2.5 of \cite{applebaum2017private}

Next, we give the quantum definition. 
We follow the definition of \cite{allerstorfer2024relating}. 
\begin{definition}\label{def:PSQM}
    A \textbf{private simultaneous quantum message (PSQM)} task is defined by a choice of function $f:X\times Y\rightarrow Z$. 
    The inputs to the task are $n$ bit strings $x\in X$ and $y\in Y$ given to Alice and Bob, respectively. 
    Alice then sends a quantum message system $M_A$ to the referee, and Bob sends quantum message system $M_B$. 
    From the combined message system $M=M_AM_B$, the referee prepares an output qubit on system $Z$. 
    We require the task be completed in a way that satisfies the following two properties.
    \begin{itemize}
        \item \textbf{$\epsilon$-correctness:} There exists a decoding map $\mathbf{V}_{M \rightarrow Z\tilde{M}}$ such that 
        \begin{align}
            \forall (x,y)\in X\times Y, \,\,\,\,\, \left\Vert\tr_{\tilde{M}}(\mathbf{V}_{M \rightarrow Z\tilde{M}} \rho_{M}(x,y) \mathbf{V}_{M \rightarrow Z\tilde{M}}^\dagger ) - \ketbra{f(x,y)}{f(x,y)}_Z\right\Vert_1 \leq \epsilon \: .
        \end{align}
        where $\rho_M(x,y)$ is the density matrix on $M$ produced on inputs $x,y$.
        \item \textbf{$\delta$-security:} There exists a simulator, which is a quantum channel $\mathbfcal{S}_{Z\rightarrow M}(\cdot)$, such that 
        \begin{align}
            \forall (x,y)\in X\times Y,\,\,\,\,\,\left\Vert\rho_{M}(x,y) - \mathbfcal{S}_{Z\rightarrow M}(\ketbra{f(x,y)}{f(x,y)}_Z)\right\Vert_1 \leq \delta \: .
        \end{align}
        Stated differently, the state of the message systems is $\delta$-close to one that depends only on the function value, for every choice of input.
    \end{itemize}
\end{definition}
As in the classical case, a quantum PSM protocol for the function $f$ gives a good quantum CDS protocol for the same function with similar efficiency \cite{allerstorfer2024relating}. 

One important difference between CDS and PSM is that PSM is lower bounded linearly by the simultaneous message passing model in communication complexity, whereas the best bounds on CDS in terms of communication complexity (in the case where finite errors are allowed) are logarithmic. 
The key distinction is that in CDS the referee knows the inputs $x,y$, whereas in PSM the referee does not know the inputs. 

\section{Revisiting lower bounds from communication complexity}

\subsection{Lower bounds from one-way communication complexity}

In this section we will give a lower bound on quantum CDS in terms of the one-way classical communication complexity. 
We first recall how this is defined. 

\begin{definition}[Classical one-way communication complexity] Let $f:\{0,1\}^n\times \{0,1\}^n\rightarrow \{0,1\}$ and $\delta\in[0,1]$.
A one-way communication protocol for $f$ is defined as follows.
Alice receives $x \in\{0,1\}^n$ as input and produces a classical string $m_A$ as output, which she sends to Bob. 
Bob receives $y \in\{0,1\}^n$ and $m_A$, and outputs a bit $z$.
The protocol is $\delta$-correct if $\Pr[z=f(x,y)]\geq 1-\delta$. 

The classical one-way communication complexity of $f$, $\R_{\delta,A\rightarrow B}(f)$ is defined as the minimum number of bits in $m_A$ needed to achieve $\delta$-correctness. 
We write $\R_{A\rightarrow B}(f)\equiv \R_{\delta=0.09,A\rightarrow B}(f)$. $\R_{B\rightarrow A}(f)$ is defined analogously by interchanging the roles of Alice and Bob.
\end{definition}

We also recall the definition of a stronger, quantum communication complexity class.
\begin{definition}[Quantum one-way communication complexity]
    Let $f:\{0,1\}^n\times \{0,1\}^n\rightarrow \{0,1\}$ and $\delta\in[0,1]$. A one-way quantum communication protocol for $f$ is defined as follows.
    At the start, Alice and Bob are each given the respective halves of a (potentially entangled) resource state $\Psi_{LR}$, and classical inputs $x\in\{0,1\}^n$ and $y\in\{0,1\}^n$. Alice produces a quantum message $M_A$ and sends it to Bob, who then outputs a bit $z$. The protocol is $\delta$-correct if $\Pr[z=f(x,y)]\geq1-\delta$.

    The quantum one-way communication complexity of $f$ with no preshared entanglement, $\Q_{\delta,A\rightarrow B}(f)$, is defined as the minimum number of qubits in $M_A$ to achieve $\delta$-correctness, when $\Psi_{LR}$ is restricted to a product state. Alternatively, $\Q^*_{\delta,A\rightarrow B}(f)$ is the minimum number of qubits in $M_A$ and $\Psi_{LR}$ to achieve $\delta$-correctness for general $\Psi_{LR}$. We write $\Q_{A\rightarrow B}$ and $\Q^*_{A\rightarrow B}$ for $\Q_{0.09,A\rightarrow B}$ and $\Q^*_{0.09,A\rightarrow B}$, respectively. The classes $\Q_{B\rightarrow A}$ and $\Q^*_{B\rightarrow A}$ are defined analogously by interchanging the roles of Alice and Bob.
\end{definition}
To relate this to quantum CDS, we use the following lemma, reproduced from \cite{asadi2024conditional}.\footnote{Similar observations appear earlier in the $f$-routing literature, going back to \cite{buhrman2013garden}.} 
The lemma captures a basic consequence of correctness and security of the CDQS protocol for the structure of the `mid-protocol density matrix', which is the state on Alice and Bob's messages systems along with a reference system.
\begin{lemma}\label{lemma:productness} \textbf{(Reproduced from \cite{asadi2024conditional})}
    Consider the mid-protocol density matrix of an $\epsilon$-correct, $\delta$-secure CDQS protocol whose $d_Q$-dimensional secret is taken to be a maximally entangled state between $Q$ and reference system $\bar{Q}$, i.e.
    \begin{align}
        \rho_{\bar{Q}M}(x,y)=\mathbfcal{N}_{Q\rightarrow M}^{x,y}(\Psi^+_{Q\bar{Q}}) \: .
    \end{align}
    where $\mathbfcal{N}_{Q\rightarrow M}^{x,y}$ represents the combined actions of Alice and Bob's operations. 
    Then, when $f(x,y)=0$ we have that for $\pi_{\bar{Q}}=\mathcal{I}/d_{Q}$
    \begin{align}
        \left\Vert\rho_{\bar{Q} M}(x,y)-\pi_{\bar{Q}} \otimes \rho_{M}(x,y) \right\Vert_1 \leq \delta \: ,
    \end{align}
    and when $f(x,y)=1$, we have that for all density matrices $\sigma_{\bar{Q}}, \sigma_{M}$,
    \begin{align}
        \left\Vert\rho_{\bar{Q}M}(x, y) - \sigma_{\bar{Q}}\otimes \sigma_{M} \right\Vert_1 \geq 2\left(1 - \frac{1}{\sqrt{d_Q}}\right) - \epsilon \: .
    \end{align}
\end{lemma}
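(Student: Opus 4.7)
The plan is to derive both parts by feeding the maximally entangled input $\Psi^+_{Q\bar{Q}}$ into the respective defining inequality, namely the security condition for the $f(x,y)=0$ case and the correctness condition for the $f(x,y)=1$ case, and then read off the structural consequences for $\rho_{\bar{Q}M}(x,y)$.

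For the security half, I would invoke $\delta$-security to obtain a simulator $\mathbfcal{S}^{x,y}_{\varnothing \to M}$ with $\|\mathbfcal{S}^{x,y}\circ \tr_Q - \mathbfcal{N}^{x,y}\|_\diamond \leq \delta$. Evaluating both channels on $\Psi^+_{Q\bar{Q}}$ (with identity on the reference $\bar{Q}$) gives $\rho_{\bar{Q}M}(x,y)$ on one side and $\pi_{\bar{Q}}\otimes \sigma_M$ on the other, where $\sigma_M$ is the fixed state output by the simulator. The definition of the diamond norm then directly yields $\|\rho_{\bar{Q}M}(x,y)-\pi_{\bar{Q}}\otimes \sigma_M\|_1 \leq \delta$. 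Tracing out $\bar{Q}$ shows $\|\sigma_M-\rho_M(x,y)\|_1 \leq \delta$ by contractivity of the trace norm, so a single triangle inequality substitutes the true marginal $\rho_M(x,y)$ for $\sigma_M$ in the bound, at the cost of at most a further constant factor that the statement absorbs into $\delta$.

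For the correctness half, $\epsilon$-correctness provides a decoder $\mathbfcal{D}^{x,y}_{M\to Q}$ with $\|\mathbfcal{D}^{x,y}\circ \mathbfcal{N}^{x,y} - \mathcal{I}_Q\|_\diamond \leq \epsilon$; applied to $\Psi^+_{Q\bar{Q}}$ this gives $\|\mathbfcal{D}^{x,y}(\rho_{\bar{Q}M}(x,y))-\Psi^+_{Q\bar{Q}}\|_1 \leq \epsilon$, since $\bar{Q}$ passes through untouched. For any product state $\sigma_{\bar{Q}}\otimes\sigma_M$, applying $\mathbfcal{D}^{x,y}$ to the $M$ factor returns another product state $\sigma_{\bar{Q}}\otimes \mathbfcal{D}^{x,y}(\sigma_M)$ on $Q\bar{Q}$, so contractivity under the decoder together with the triangle inequality yields
\[\|\rho_{\bar{Q}M}(x,y)-\sigma_{\bar{Q}}\otimes\sigma_M\|_1 \;\geq\; \|\Psi^+_{Q\bar{Q}}-\sigma_{\bar{Q}}\otimes\mathbfcal{D}^{x,y}(\sigma_M)\|_1 - \epsilon.\]
The remaining input is the standard fidelity bound $\max_{\tau_A,\tau_B}\langle\Psi^+|\tau_A\otimes\tau_B|\Psi^+\rangle=1/d_Q$, which one sees by restricting to pure product states using convexity and then applying Cauchy--Schwarz to the overlap $\langle \Psi^+|\psi\rangle|\phi\rangle$. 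Combining with Fuchs--van de Graaf gives $\|\Psi^+-\tau_A\otimes\tau_B\|_1 \geq 2(1-1/\sqrt{d_Q})$, closing the bound.

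Neither half is substantially difficult: the essential move is that probing the protocol with half of a maximally entangled state converts the operational diamond-norm guarantees of security and correctness into statements about the global bipartite structure of the mid-protocol state on $\bar{Q}M$, and the only nontrivial ingredient is the $1/d_Q$ fidelity bound between $\Psi^+$ and any product state.
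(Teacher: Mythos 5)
The paper does not actually prove this lemma; it is imported verbatim from \cite{asadi2024conditional}, so there is no in-paper argument to compare against. Your proposal is the standard argument and is essentially correct. The correctness half is exactly right: probing the decoder guarantee with $\Psi^+_{Q\bar{Q}}$, using contractivity of the trace norm under $\mathbfcal{D}^{x,y}$ (which preserves productness across the $\bar{Q}:M$ cut), and invoking $\max_{\tau_A,\tau_B}\bra{\Psi^+}\tau_A\otimes\tau_B\ket{\Psi^+}=1/d_Q$ together with Fuchs--van de Graaf reproduces the stated bound $2(1-1/\sqrt{d_Q})-\epsilon$ with the exact constant. The one point to flag is in the security half: your chain (diamond-norm guarantee at the input $\Psi^+_{Q\bar{Q}}$ gives $\Vert\rho_{\bar{Q}M}-\pi_{\bar{Q}}\otimes\sigma_M\Vert_1\le\delta$, then monotonicity gives $\Vert\sigma_M-\rho_M\Vert_1\le\delta$, then the triangle inequality) yields $2\delta$, not the $\delta$ claimed in the statement, and you correctly acknowledge this. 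There is no obvious way to avoid the substitution of $\rho_M$ for $\sigma_M$, so the factor of $2$ appears intrinsic to this route; since every downstream use of the lemma only needs a small constant multiple of $\delta$, this is a cosmetic discrepancy with the statement as reproduced rather than a gap in your reasoning, but it is worth stating the bound you actually prove as $2\delta$ rather than silently absorbing the constant.
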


In \cite{asadi2024conditional}, the authors prove the following lower bound on quantum CDS. 
\begin{theorem} \label{thm:oneway}[Reproduced from \cite{asadi2024conditional}]
The one-way quantum communication complexity of $f$ and the communication cost of a CDQS protocol for $f$ are related by
\begin{align}
    q_A+q_B = \Omega(\log \Q_{B\rightarrow A}^{*}(f))\,,
\end{align}
where $q_A$ is the number of qubits sent from Alice to the referee, and $q_B$ is the number of qubits sent from Bob to the referee. 
\end{theorem}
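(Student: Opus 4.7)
The plan is to reduce from a CDQS protocol for $f$ with communication cost $q_A + q_B$ to a one-way quantum communication protocol (with preshared entanglement) from Bob to Alice for $f$, of message size $2^{O(q_A + q_B)}$ qubits. This will establish $\Q^*_{B \to A}(f) \leq 2^{O(q_A + q_B)}$ and hence the bound $q_A + q_B = \Omega(\log \Q^*_{B\to A}(f))$. The main structural input is Lemma \ref{lemma:productness}: if Alice feeds half of a maximally entangled reference state $\Psi^+_{Q\bar{Q}}$ into the CDQS encoder as the secret, the mid-protocol density matrix $\rho_{\bar{Q} M}(x,y)$ is $\delta$-close to a product state $\pi_{\bar{Q}} \otimes \rho_M$ when $f(x,y) = 0$, and at least $2(1 - 1/\sqrt{d_Q}) - \epsilon$ far in trace distance from every bipartite product when $f(x,y) = 1$. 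Choosing $d_Q$ a sufficiently large constant (using Theorem \ref{thm:amplify} if necessary) opens a constant gap between these two cases.

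Construction of the one-way protocol: Alice and Bob preshare $N$ copies of the CDQS resource $\Psi_{LR}$, with Alice additionally holding $N$ fresh copies of $\Psi^+_{Q \bar{Q}}$. Bob applies his CDQS channel $\mathbfcal{N}^y$ to each of his $N$ halves of $R$ and sends the resulting $N$ copies of $M_B$ to Alice, at a cost of $N q_B$ qubits of one-way communication. Alice applies $\mathbfcal{N}^x$ to each of her $(Q,L)$ pairs to produce $N$ copies of $M_A$, and she now locally holds $N$ independent copies of the mid-protocol density matrix $\rho_{\bar{Q} M}(x,y)$, each on $\log d_Q + q_A + q_B$ qubits. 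She then runs a distinguishing test between the two promise classes of Lemma \ref{lemma:productness}: for instance via standard full state tomography, which requires $N = 2^{O(q_A + q_B)}$ copies to estimate $\rho_{\bar{Q} M}(x,y)$ in trace distance to a precision below the productness gap. She outputs her guess for $f(x,y)$ by comparing her estimate $\hat\rho$ to $\pi_{\bar{Q}} \otimes \tr_{\bar{Q}}(\hat\rho)$.

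The total one-way quantum communication from Bob is then $N q_B = 2^{O(q_A + q_B)}$ qubits, which, taking a logarithm, gives the claimed bound. The main subtlety lies in the distinguishing subroutine: because Alice does not know $y$, she cannot use the CDQS decoder and is forced into a generic product-versus-far-from-product test, which requires exponentially many copies of $\rho_{\bar{Q} M}(x,y)$. This exponential blowup is precisely the reason the bound is only logarithmic in $\Q^*_{B\to A}(f)$; any off-the-shelf tomography or product-testing procedure with $\mathrm{poly}(2^{q_A + q_B})$ sample complexity is good enough, so no further work is needed here.
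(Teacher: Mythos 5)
Your proposal is correct and follows essentially the same route as the paper's own proof sketch: run the CDQS protocol with the secret maximally entangled with a reference, have Bob send his message to Alice, and use state tomography on $2^{O(q_A+q_B)}$ copies together with Lemma~\ref{lemma:productness} to distinguish the near-product case ($f=0$) from the far-from-product case ($f=1$).
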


The proof idea is as follows. 
Starting with a quantum CDS protocol, we build a one-way quantum communication protocol. 
We have Alice and Bob share the same entangled state $\Psi_{LR}$ as in the CDQS protocol. 
Then, Alice and Bob perform the CDQS operations, call them $\mathbfcal{N}^x_{Q L\rightarrow M_A}$ and $\mathbfcal{N}^y_{R\rightarrow M_B}$, taking the secret $Q$ to be maximally entangled with a reference $\bar{Q}$, and Bob sends his message $M_{B}$ to Alice. 
Note that from lemma \ref{lemma:productness}, the resulting density matrix $\rho_{\bar{Q}M}(x,y)$ will be close to product across $\bar{Q}$ and $M=M_AM_B$ when $f(x,y)=0$ and close to maximally entangled when $f(x,y)=1$.
Repeating this procedure $2^{q_A+q_B}$ times, Alice can use standard tomography techniques to make measurements characterizing the density matrix and hence determine $f(x,y)$. 
Thus $2^{q_A+q_B}\geq \Q^*_{A\rightarrow B}(f)$, leading to the claimed bound. 

Our observation here is that we can adjust this strategy to get a lower bound from the classical communication complexity. 
The strategy is for Bob to send Alice a classical description of the quantum state $\rho_{LM_B}(y)=\mathbfcal{N}^y_{R\rightarrow M_B}(\Psi_{LR})$ rather than the state itself. 
This description is never too much larger than $2^{q_B + E}$ bits, leading to the following bound. 

\begin{theorem} \label{thm:class_ow}
Consider a robust CDQS protocol which uses a resource state $\Psi_{LR}$ with $L$ and $R$ consisting of $E$ qubits, and where Alice and Bob send $q_A$ and $q_B$ qubits to the referee respectively. 
Then,
\begin{align}
    q_{B} + E \geq \tilde{\Omega}(\log \R_{0, B\rightarrow A}(f))
\end{align}
The $\tilde{\Omega}$ notation indicates we have suppressed a dependence on $\log (q_B+E)$. 
The same bound also holds with $A\leftrightarrow B$. 
\end{theorem}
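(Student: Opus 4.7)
The plan is to mimic the proof of Theorem \ref{thm:oneway} from \cite{asadi2024conditional}, but have Bob send Alice a classical description of the state $\rho_{LM_B}(y)=(\mathbfcal{I}_L\otimes\mathbfcal{N}^y_{R\rightarrow M_B})(\Psi_{LR})$ rather than transmitting the state itself. Concretely, starting from a $0.09$-robust CDQS protocol $(\Psi_{LR},\mathbfcal{N}^x_{QL\rightarrow M_A},\mathbfcal{N}^y_{R\rightarrow M_B})$ publicly specified, I will construct a zero-error classical one-way protocol from Bob to Alice as follows. Bob computes $\rho_{LM_B}(y)$ from the protocol specification and his input $y$ alone (he does not need physical access to $L$), then transmits a classical description $\tilde\rho_{LM_B}(y)$ of this state to Alice with trace-norm precision $\eta$ (to be fixed as a small absolute constant). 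Alice prepares the reference qubit $\bar Q$ maximally entangled with $Q$, applies her channel $\mathbfcal{N}^x_{QL\rightarrow M_A}$ to her halves of $\Psi^+_{Q\bar Q}$ and of $\tilde\rho_{LM_B}(y)$, obtaining an estimate $\tilde\rho_{\bar Q M}(x,y)$, and outputs $0$ or $1$ depending on whether the trace distance $\min_{\sigma_{\bar Q},\tau_M}\lVert\tilde\rho_{\bar Q M}(x,y)-\sigma_{\bar Q}\otimes\tau_M\rVert_1$ falls below or above a threshold (roughly the midpoint of $0.09$ and $2(1-1/\sqrt{2})-0.09$).

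The correctness of the resulting classical protocol follows from Lemma \ref{lemma:productness} applied to the mid-protocol density matrix $\rho_{\bar Q M}(x,y)$ with $d_Q=2$, combined with the monotonicity of the trace distance under the quantum channel $\mathbfcal{N}^x_{QL\rightarrow M_A}$, which guarantees that $\lVert\tilde\rho_{\bar Q M}(x,y)-\rho_{\bar Q M}(x,y)\rVert_1\leq \eta$. For $\eta$ a small enough absolute constant (say $\eta=0.1$), the inequalities of Lemma \ref{lemma:productness} give a constant-size gap between the $f(x,y)=0$ and $f(x,y)=1$ cases that the threshold comparison always resolves correctly, so the classical protocol has zero error (and Alice's computation, though possibly inefficient, is deterministic).

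The communication cost is bounded by observing that $\rho_{LM_B}(y)$ is a density matrix on $d=2^{E+q_B}$ dimensions, specified by $O(d^2)$ real parameters, each of magnitude at most one. Using the bounds $\lVert\cdot\rVert_1\leq\sqrt{d}\lVert\cdot\rVert_F$ and $\lVert\cdot\rVert_F\leq d\cdot\max_{ij}\lvert\cdot_{ij}\rvert$, discretizing each parameter to precision $\eta/d^{3/2}$ suffices to achieve trace-distance precision $\eta$; this costs $O(\log d)$ bits per parameter, for a total of $O(d^2\log d)=O(4^{E+q_B}(E+q_B))$ bits. Hence $\R_{0,B\rightarrow A}(f)\leq O(4^{E+q_B}(E+q_B))$, which upon taking logs gives $q_B+E\geq \tilde\Omega(\log\R_{0,B\rightarrow A}(f))$, with the $\log(q_B+E)$ term absorbed into the $\tilde\Omega$. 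The symmetric bound with $A\leftrightarrow B$ is obtained by exchanging the roles of the two players throughout.

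The main obstacle I anticipate is bookkeeping the precision carefully so that a \emph{constant} $\eta$ suffices for distinguishing and the bit count cleanly matches the $\tilde\Omega(\log \R_{0,B\rightarrow A}(f))$ form. In particular, one must make sure that the distance-to-product threshold used by Alice tolerates both the $\eta$ error in her reconstruction and the non-zero security/correctness errors of the underlying CDQS protocol simultaneously, which is why Lemma \ref{lemma:productness}'s constant gap at $d_Q=2$ is crucial.
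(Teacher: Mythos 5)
Your proposal is correct and follows essentially the same route as the paper's proof: Bob transmits a discretized classical description of $\rho_{LM_B}(y)$, Alice classically reconstructs the mid-protocol state $\rho_{\bar{Q}M}(x,y)$ and thresholds its distance to the nearest product state using Lemma \ref{lemma:productness}, and the bit count $O(4^{E+q_B}(E+q_B))$ matches the paper's $k\,d_L^2 d_{M_B}^2$ accounting with the same Frobenius-to-trace-norm conversion. The only cosmetic difference is that the paper fixes the tolerance as an explicit $\gamma(\epsilon,\delta)=\frac{1}{2}(1-1/\sqrt{d_Q})-\epsilon/4-\delta/4$ rather than a generic constant $\eta$, which does not change the argument.
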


\begin{proof}
    Alice and Bob hold descriptions of the CDQS protocol (but need not hold the resource state $\Psi_{LR}$). 
    Upon receiving $y$, Bob sends to Alice a classical description of the state
    \begin{align}
        \rho_{LM_B}(y) = \mathbfcal{N}^y_{R\rightarrow M_B}(\Psi_{LR})
    \end{align}
    which he can compute, since he knows both $y$ and the description of the CDQS protocol. 
    We have him specify the entries in $\rho_{LM_B}$ to $k$ digits, where we choose $k$ later, so that his message is $k d_L^2d_B^2$ bits. 
    This describes a matrix $\hat{\rho}_{LM_B}$ which has each entry differ from the corresponding entry of $\rho_{LM_B}$ by at most $1/2^k$, so that
    \begin{align}
        \left\Vert\hat{\rho}_{LM_B}-\rho_{LM_B}\right\Vert_2 = \sqrt{\sum_{i,j=1}^{d_{L}d_{M_B}}|(\hat{\rho}_{LM_B}-\rho_{LM_B})_{ij}|^2} \leq \frac{d_Ld_{M_B}}{2^k}
    \end{align}
    where $(\hat{\rho}_{LM_B}-\rho_{LM_B})_{ij}$ denotes the $(i,j)$ matrix entry of $\hat{\rho}_{LM_B}-\rho_{LM_B}$. 
    We can relate this to the trace distance using that, for a $d\times d$ matrix, $\left\Vert A\right\Vert_1\leq \sqrt{d}\left\Vert A \right\Vert_2$, which here gives
    \begin{align}
        \left\Vert\hat{\rho}_{LM_B}- \rho_{LM_B}\right\Vert_1 \leq \sqrt{d_Ld_{M_B}}\left\Vert \hat{\rho}_{LM_B}- \rho_{LM_B}\right\Vert_2 \leq \frac{d_L^{3/2}d^{3/2}_{M_B}}{2^k}.
    \end{align}
    After Bob communicates his description of $\hat{\rho}$ to Alice, Alice will compute the density matrix $\rho_{\bar{Q}M}$ and check if it is close to product or not. 
    From lemma \ref{lemma:productness}, this allows her to determine $f(x,y)$. 
    Note that since Alice knows the channel $\mathbfcal{N}^x_{LQ\rightarrow M_A}$ this doesn't introduce any additional error, 
    \begin{align}
        ||\hat{\rho}_{QM}-\rho_{QM}||_1 \leq ||\hat{\rho}_{LM_B}-\rho_{LM_B}||_1.
    \end{align}
    Quantitatively, one can check that for Alice to be able to determine $||\rho_{\bar{Q}M}-\pi_{\bar{Q}}\otimes \rho_M||_1$ with sufficient precision it suffices for her to learn $\rho$ to within trace distance 
    \begin{align}
        ||\hat{\rho}_{\bar{Q}M}-\rho_{\bar{Q}M}||_1=\gamma(\epsilon, \delta) = \frac{1}{2}\left(1-\frac{1}{\sqrt{d_Q}}\right)- \frac{\epsilon}{4} - \frac{\delta}{4}.
    \end{align}
    Then, if the distance to the product state $\pi_{\bar{Q}}\otimes \rho_{M}$ is less than $\gamma$ she can conclude $f(x,y)=0$ with certainty, while if it is larger than that she can conclude $f(x,y)=1$ with certainty. 
    In terms of our parameter $k$, we need then that
    \begin{align}
        \frac{d_L^{3/2}d^{3/2}_{M_B}}{2^k} \leq \gamma(\epsilon, \delta)
    \end{align}
    so that $k=\frac{3}{2}(q_B+E)-\log\gamma(\epsilon, \delta)$ suffices. 
    Bob's total communication is $k\times d_L^2d_{M_b}^2$, so that
    \begin{align}
        \R_{0,B\rightarrow A}(f) \leq \left(\frac{3}{2}(q_B+E)-\log\gamma(\epsilon, \delta)\right)2^{2(E+q_B)}
    \end{align}
    or
    \begin{align}
        q_B+E \geq \frac{1}{2}\log \R_{0,B\rightarrow A} - O(\log(q_B+E))
    \end{align}
\end{proof}

We also observe the following simple corollary of this theorem. 

\begin{corollary} A CDQS protocol which uses $E$ qubits of shared resource, $q_A$ qubits of message from Alice, and $q_B$ qubits of message from Bob must satisfy
\begin{align}
    \overline{CDQS}(f)=2E+q_A+q_B \geq \tilde{\Omega}(\log \R_{0,B\rightarrow A}(f) + \log \R_{0,A\rightarrow B}(f))
\end{align}
Here, the $\tilde{\Omega}$ notation indicates that we've suppressed a dependence on $\log (q_A+E)$ and $\log( q_{B}+E)$.
\end{corollary}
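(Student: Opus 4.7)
The plan is to derive the corollary essentially for free from the preceding theorem, using symmetry and adding two instances of the bound. The theorem already establishes $q_B + E \geq \tilde\Omega(\log \R_{0,B\rightarrow A}(f))$, with the $\tilde\Omega$ absorbing additive terms of order $\log(q_B + E)$. Since the CDQS setting is symmetric in Alice and Bob (as the theorem itself notes in its final line), the identical argument, with the roles of $A$ and $B$ interchanged throughout, yields the companion bound $q_A + E \geq \tilde\Omega(\log \R_{0,A\rightarrow B}(f))$.

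First I would state both inequalities explicitly, being careful that in the second one it is now Alice who communicates a classical description of her partial state $\rho_{R M_A}(x) = \mathbfcal{N}^x_{QL\rightarrow M_A}(\Psi_{LR})$ to Bob rather than Bob to Alice; this is the only change required in the proof of the theorem, and it preserves the dependence of the $\tilde\Omega$ on the logarithm of the communication plus entanglement. Then I would simply add the two inequalities. The left-hand sides sum to $q_A + q_B + 2E = \overline{\CDQS}(f)$ (recalling the quantum convention noted in the footnote that $\overline{\CDQS}$ counts both entanglement and messages), and the right-hand sides sum to $\tilde\Omega(\log \R_{0,B\rightarrow A}(f) + \log \R_{0,A\rightarrow B}(f))$, which is the claimed bound.

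There is no genuine obstacle here; the corollary is an immediate consequence of the theorem together with the $A \leftrightarrow B$ symmetry, and the only thing worth flagging is the bookkeeping of the suppressed logarithmic terms. Specifically, when absorbing the $O(\log(q_B + E))$ and $O(\log(q_A + E))$ correction terms into $\tilde\Omega$, one should record (as the corollary does) that the suppressed dependence is on $\log(q_A + E)$ and $\log(q_B + E)$, so that the $\tilde\Omega$ notation in the corollary matches what the theorem produces.
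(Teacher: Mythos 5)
Your proposal is correct and matches the paper's intended derivation: the paper states this as an immediate corollary of Theorem \ref{thm:class_ow}, obtained by adding the two instances of the bound (one for each direction, using the ``same bound also holds with $A\leftrightarrow B$'' clause) so that the left-hand sides sum to $2E+q_A+q_B=\overline{\text{CDQS}}(f)$. Your bookkeeping of the suppressed logarithmic terms is also consistent with the corollary's statement.
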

Notice that this gives a different lower bound compared to $\log \Q^*_{B\rightarrow A}$, which in general is smaller than the lower bound above but also lower bounds the communication cost $q_A+q_B$ alone, rather than the entanglement cost plus the communication cost. 
Unlike in the classical case we do not in general know if the entanglement cost can be much larger than the communication cost, so a priori these are different bounds. 
However, in all known protocols the entanglement and communication costs are similar. 
In that setting the lower bound from classical communication complexity is stronger. 

\subsection{Two-prover, public coin lower bound}

In the classical setting, CDS can be lower bounded polynomially by the $\AM^{\cc}$ complexity, as we noted in equation \eqref{eq:AMbound}. 
To obtain a non-trivial public coin lower bound in the quantum case, we find it necessary to consider two-prover proof settings. 
We define the appropriate two-prover proof setting next. 

\begin{definition} \textbf{(Two-prover, two-message, public coin proof)} Let $f:\{0,1\}^n\times \{0,1\}^n\rightarrow \{0,1\}$ and $\epsilon, \delta\in (0,1)$. 
A two-prover, two-message, public coin proof for $f$ in the communication complexity setting is an interactive protocol executed by two provers, prover 1 and prover 2, and two verifiers, Alice and Bob.
Provers 1 and 2 both hold strings $x,y\in \{0,1\}^n$ and begin with a shared entangled state $\varphi_{PP'}^{x,y}$, with prover 1 holding $P$ and prover $2$ holding $P'$. 
Alice knows input $x\in \{0,1\}^n$, Bob knows input $y\in \{0,1\}^n$, and Alice and Bob share input state $\ket{\Psi}_{LR}$ which is independent of $x,y$.
The protocol proceeds as follows. 
\begin{itemize}
    \item Alice shares random bits $r\in \{0,1\}^{|r|}$ with $P$. 
    \item Prover 1 prepares systems $M=M_AM_B$ from system $P$ and sends message systems $M_A$ to Alice and $M_B$ to Bob.
    \item Prover 2 prepares systems $M'=M_A'M_B'$ from system $P'$ and sends message systems $M_A'$ to Alice and $M_B'$ to Bob. 
    \item Alice and Bob apply local operations, which may depend on $x$ and $y$ respectively, and communicate with one another. After this interaction round, Alice outputs either $0$ (reject) or $1$ (accept). 
\end{itemize}
We require that
\begin{itemize}
    \item \textbf{$\epsilon$-correctness}: For all $(x,y)\in f^{-1}(1)$, Alice accepts with probability at least $1-\epsilon$.
    \item \textbf{$\delta$-security}: For all $(x,y)\in f^{-1}(0)$, Alice accepts with probability at most $\delta$.
\end{itemize}
\end{definition}
The cost of a two-prover, two-message, public coin proof is defined as the total number of qubits of communication sent by the provers plus the total communication used by Alice and Bob. 
The minimal cost over all protocols for a function $f$ that achieves $\epsilon$-correctness and $\delta$-security we label as $\text{QAM}[2,2]_{\epsilon, \delta}^{\cc}(f)$. 

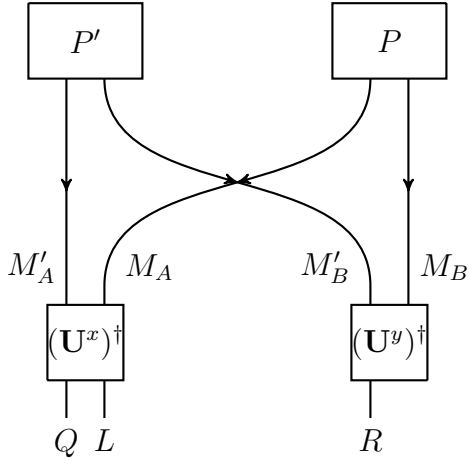
\begin{figure*}
    \centering
    \begin{tikzpicture}[scale=0.5]
    
    \draw[thick] (-5,-5) -- (-5,-3) -- (-3,-3) -- (-3,-5) -- (-5,-5);
    \node at (-4,-4) {$(\mathbf{U}^x)^\dagger$};
    
    \draw[thick] (5,-5) -- (5,-3) -- (3,-3) -- (3,-5) -- (5,-5);
    \node at (4,-4) {$(\mathbf{U}^y)^\dagger$};
    
    \draw[thick] (5.5,5) -- (5.5,3) -- (2.5,3) -- (2.5,5) -- (5.5,5);
    \node at (4,4) {$P$};
    
    \draw[thick] (-5.5,5) -- (-5.5,3) -- (-2.5,3) -- (-2.5,5) -- (-5.5,5);
    \node at (-4,4) {$P'$};
    
    \draw[thick, mid arrow] (-4.5,3) -- (-4.5,-3);
    \node[left] at (-4.5,-2) {$M_A'$};
    
    \draw[thick, mid arrow] (4.5,3) -- (4.5,-3);
    \node[right] at (4.5,-2) {$M_B$};
    
    \draw[thick, mid arrow] (3.5,3) to [out=-90,in=90] (-3.5,-2.5);
    \draw[thick] (-3.5,-2.5) -- (-3.5,-3);
    \node[right] at (-3.25,-2) {$M_A$};
    
    \draw[thick, mid arrow] (-3.5,3) to [out=-90,in=90] (3.5,-2.5);
    \draw[thick] (3.5,-2.5) -- (3.5,-3);
    \node[left] at (3.25,-2) {$M_B'$};
    
    \draw[thick] (-3.5,-6) -- (-3.5,-5);
    \node[below] at (-3.5,-6) {$L$};
    \draw[thick] (3.5,-6) -- (3.5,-5);
    \node[below] at (3.5,-6) {$R$};
    
    \draw[thick] (-4.5,-6) -- (-4.5,-5);
    \node[below] at (-4.5,-6) {$Q$};
    
    \end{tikzpicture}
    \caption{The two-prover proof protocol. Alice and Bob receive systems $M_A$ and $M_B$ from prover $P$, and systems $M_A'$ and $M_B'$ from prover $P'$. Alice applies $(\mathbf{U}^x_{M_AM_A'\rightarrow QL})^\dagger$, Bob applies $(\mathbf{U}^y_{R\rightarrow M_BM_B'})^\dagger$. Bob then sends $R$ to Alice, who measures $LR$ and $Q$ to check they are the inputs to the corresponding CDQS protocol.}
    \label{fig:CDSandtwoprovers}
\end{figure*}

We can now state our public-coin lower bound on robust quantum CDS.
\begin{theorem}\label{thm:QAMbound}
For any fixed $\epsilon_p, \delta_p$, 
\begin{align}
    \overline{\textnormal{CDQS}}(f) \geq \QAM[2,2]_{\epsilon_p,\delta_p}^{\cc}(f)  - c(\epsilon_p,\delta_p) \: .
\end{align} 
\end{theorem}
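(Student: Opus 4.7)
My plan is to convert any CDQS protocol for $f$ into a two-prover, two-message, public-coin proof for $f$ whose cost matches the CDQS cost up to an additive constant $c(\epsilon_p,\delta_p)$. The architecture is the one sketched in Figure~\ref{fig:CDSandtwoprovers}. First I would apply Theorem~\ref{thm:amplify} to sharpen the given CDQS protocol so that its correctness and privacy errors are both at most $2^{-k}$ for a parameter $k=k(\epsilon_p,\delta_p)$ fixed at the end; since $k$ depends only on the target error parameters, the multiplicative overhead $O(k)$ on communication and entanglement is absorbed into $c(\epsilon_p,\delta_p)$. Let $\mathbf{U}^x_{QL\to M_AM_A'}$ and $\mathbf{U}^y_{R\to M_BM_B'}$ be Stinespring dilations of Alice's and Bob's amplified encoding channels, and let $\Psi_{LR}$ denote the CDQS resource state.

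In the QAM[2,2] protocol the honest provers are supposed to share $(\mathbf{U}^x\otimes\mathbf{U}^y)(\Psi^+_{Q\bar Q}\otimes\Psi_{LR})$, with prover $P$ holding $M_AM_B$ and prover $P'$ holding $M_A'M_B'\bar Q$. Each prover sends its Alice-marked and Bob-marked systems to the corresponding verifier, with the reference $\bar Q$ bundled into $P'$'s message to Alice. Alice applies $(\mathbf{U}^x)^\dagger$ on $M_AM_A'\to QL$, Bob applies $(\mathbf{U}^y)^\dagger$ on $M_BM_B'\to R$, and Bob then sends $R$ to Alice via the inter-verifier channel. Alice accepts iff the projectors onto $\Psi^+_{Q\bar Q}$ and onto $\Psi_{LR}$ both succeed. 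Completeness is automatic: on the honest state, inverting the encoding isometries returns $\Psi^+_{Q\bar Q}\otimes\Psi_{LR}$ up to diamond-norm error $O(2^{-k})$ by correctness of the amplified CDQS, so both projective tests succeed with probability at least $1-\epsilon_p$.

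The main technical step is soundness when $f(x,y)=0$. On any cheating state $\rho_{MM'\bar Q}$, Alice's acceptance probability equals the overlap of $\rho_{MM'\bar Q}$ with the fixed target $(\mathbf{U}^x\otimes\mathbf{U}^y)(\Psi^+_{Q\bar Q}\otimes\Psi_{LR})$, because undoing the isometries turns Alice's two projective measurements into a rank-one projector onto $\Psi^+_{Q\bar Q}\otimes\Psi_{LR}$ on the undone systems. The target's marginal on $\bar QM$ is exactly the mid-protocol CDQS state $\rho_{\bar QM}(x,y)$ of Lemma~\ref{lemma:productness}. When $f(x,y)=0$ that lemma forces this marginal to be $2^{-k}$-close to the product $\pi_{\bar Q}\otimes\rho_M(x,y)$, while Lemma~\ref{lemma:complementaryCDS} says dually that the full information about $\bar Q$ is recoverable from $M'$ alone via the complementary decoder $\mathbfcal{D}^{x,y}$. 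The provers' dilemma is that prover $P$'s message on $M$ cannot simultaneously pass the near-product marginal constraint and supply the sharp $\bar Q$-correlations needed for the $\Psi^+_{Q\bar Q}$ projector to fire. Quantifying this tension by expressing acceptance as a fidelity, converting trace distance to fidelity via Fuchs--van~de~Graaf, and collapsing the maximization over cheating strategies using the convexity bound~\eqref{eq:handyFbound} yields acceptance probability $O(2^{-k/2})$, which is below $\delta_p$ once $k$ is taken large enough.

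Adding everything up, the constructed QAM[2,2] protocol has cost $O(k)\cdot\overline{\CDQS}(f)+O(1)$, with the $O(1)$ accounting for the single inter-verifier message $R$; absorbing the $k$-dependent factor into $c(\epsilon_p,\delta_p)$ gives the claimed bound. The main obstacle is the soundness analysis: turning the near-product structure of Lemma~\ref{lemma:productness} and the decoupling reconstruction guarantee of Lemma~\ref{lemma:complementaryCDS} into an explicit quantitative bound on the best entangled cheating strategy across the four prover-to-verifier registers $M_A,M_A',M_B,M_B'$.
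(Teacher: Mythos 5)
There is a genuine gap in your soundness argument, and it traces back to a structural choice: you replaced the public coin with a maximally entangled reference $\bar Q$ held by prover $P'$. In your protocol the verifiers' entire test is a projection onto the fixed pure state $(\mathbf{U}^x\otimes\mathbf{U}^y)(\Psi^+_{Q\bar Q}\otimes\Psi_{LR})$ (after undoing the isometries), and this target state exists and is preparable for \emph{every} input pair $(x,y)$ — including those with $f(x,y)=0$. Since the provers know $x$ and $y$ and are computationally unbounded, they can simply prepare that exact state and distribute its registers honestly; Alice then accepts with probability $1$ regardless of the value of $f$. Lemma~\ref{lemma:productness} constrains the marginal of the state produced by honestly running the CDQS encoding, but it imposes no obstruction on cheating provers: a state whose $\bar QM$ marginal is near-product is still a perfectly good global pure state with overlap $1$ against itself. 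The "provers' dilemma" you describe does not exist, because nothing in your protocol forces prover $P$'s message to carry information that security forbids it from carrying. (Your protocol also does not match the paper's definition of $\QAM[2,2]^{\cc}$, which requires Alice to send random bits to prover 1.)

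The paper's proof closes exactly this hole with the public coin. Alice draws a fresh random string $s$ \emph{after} the provers have committed to their shared entangled state and sends it only to prover 1; the secret loaded into the CDQS is $\ket{s}_Q$ rather than half of $\Psi^+_{Q\bar Q}$, and Alice's final test includes measuring $Q$ in the standard basis and checking that it equals $s$. When $f(x,y)=0$, security plus Lemma~\ref{lemma:complementaryCDS} imply that $s$ is recoverable from the purifying registers $M'$ alone, so the honest states $\psi^s_{M'}$ are nearly orthogonal across distinct $s$ (equation~\eqref{eq:orthogonalFs}); but prover 2 never sees $s$, so its reduced state $\sigma_{M'}$ is $s$-independent. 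The bound~\eqref{eq:handyFbound} then caps the average acceptance probability at roughly $\sqrt{1/d_Q+2(\delta')^{1/4}}$, which is driven below $\delta_p$ by taking $k$ large. Your completeness and cost accounting would survive essentially unchanged under this modification, but without the coin-and-check mechanism the soundness claim is false, not merely unproven.
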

\begin{proof}
We begin with a CDQS protocol for the function $f$, which by definition achieves $\epsilon, \delta\leq 0.09$ and hides a single qubit secret.
We then amplify using theorem \ref{thm:amplify} to achieve $\epsilon', \delta' \leq 0.09 \cdot 2^{-k}$ and a $k$ qubit secret.
We choose $k$ later to achieve the target $\epsilon_p, \delta_p$ parameters for the two-prover proof. 

The amplified CDQS protocol is defined by Alice and Bob's operations, 
\begin{equation}
    \mathbfcal{N}^x_{QL\rightarrow M_A} \: , \quad
    \mathbfcal{N}^y_{R\rightarrow M_B} \: ,
\end{equation}
as well as a shared resource state, $\Psi_{LR}$. 
In constructing an interactive proof, we will consider unitaries $\mathbf{U}^x_{QAL\rightarrow M_AM_A'}$ and $\mathbf{U}^y_{BR\rightarrow M_BM_B'}$ which purify the above channels, where $A$ and $B$ are any required local ancilla. 
Note that we can always find such purifications with
\begin{equation}\label{eq:systembounds}
    d_{M_A'} \leq d_Qd_Ld_{M_A} \: , \qquad
    d_{M_B'} \leq d_R d_{M_B} \: .
\end{equation}
Further, we distribute a state $\ket{\Psi}_{ELR}$ purifying $\Psi_{LR}$ and with $EL$ held by Alice and $R$ held by Bob.
For convenience we will relabel $EL\rightarrow L$. 
Then, Alice and Bob execute the following: 
\begin{itemize}
    \item Alice sends prover 1 $n_Q$ random bits, $s$. 
    \item Prover 1 sends $M_A$ to Alice and $M_B$ to Bob; prover 2 sends $M_A'$ to Alice and $M_B'$ to Bob. 
    \item Alice executes $(\mathbf{U}^x_{QAL\rightarrow M_AM_A'})^\dagger$; Bob executes $(\mathbf{U}^y_{BR\rightarrow M_BM_B'})^\dagger$.
    \item Bob sends $R$ to Alice. 
    \item Alice measures $A$ in the standard basis; Bob measures $B$ in the standard basis. Additionally, Alice measures $LR$ in a basis that includes $\ket{\Psi}_{LR}$, and measures $Q$ in the standard basis. If the measurements of $A$ and $B$ all return $0$, the measurement of $LR$ returns $\ket{\Psi}_{LR}$, and the measurement of $Q$ returns $s$, Alice outputs accept.
\end{itemize}
The total communication cost of this protocol is
\begin{align}
    \log (d_{M_A}d_{M_A'}d_{M_B}d_{M_B'}) + \log d_R &\leq 2\log d_{M_A} + 2\log d_{M_B} + \log d_Q + \log d_R \nonumber \\
    &\leq 2\,\overline{\text{CDQS}}(f) + k \: ,
\end{align}
where we used the inequalities \eqref{eq:systembounds} from above. 
Thus so long as we can take $k$ to be constant, this protocol has the required cost. 
We proceed to study correctness and security of the protocol.

\vspace{0.2cm}
\noindent \textbf{Correctness:} We consider the case where $(x,y)\in f^{-1}(1)$. 
Observe that running the CDQS protocol forwards would produce a state $\epsilon$ close to $\varphi_{M'P}^{x,y}\otimes \ketbra{s}{s}_Q$ where the referee holds $PQ$ and Alice and Bob hold the purifying system $M'$. 
This state results from the referee applying the recovery operation $\mathbf{V}_{M\rightarrow PQ}^{x,y}$ to the message system he receives, $M$.
In the two-prover proof, we have the two provers prepare $\varphi_{M'P}^{x,y}$ in advance and have prover 1 hold $P$ and prover 2 hold $M'$. 
Then, upon receiving the random string $s$, prover $1$ applies $(\mathbf{V}_{M\rightarrow PQ}^{x,y})^\dagger$ to prepare system $M$, and then each prover then sends $M_A, M_A', M_B, M_B'$ to Alice and Bob according to the pattern shown in figure \ref{fig:CDSandtwoprovers}. 

Now, when Alice and Bob implement $(\mathbf{U}^x_{QAL\rightarrow M_AM_A'})^\dagger\otimes (\mathbf{U}^y_{BR\rightarrow M_BM_B'})^\dagger$ they are completing running the CDQS protocol in reverse, so will produce the inputs to the protocol and pass the test.
In fact, since the protocol is only approximately correct, the output is only approximated by a state with the secret stored on the right, so the test is only passed with some high (but not exactly 1) probability. 
In appendix \ref{sec:QAMlowerbounddetails}, we show that $\epsilon'$-correctness of the CDQS protocol implies $2\sqrt{\epsilon'}$-correctness of the two-prover proof. 
Since $\epsilon'=\epsilon 2^{-k}$, choosing $k$ a large enough constant ensures the two-prover proof is $\epsilon_p$ correct. 

\vspace{0.2cm}
\noindent \textbf{Soundness:} Now suppose that $(x,y)\in f^{-1}(0)$. 
The provers will try to convince Alice to accept. 
Their probability of doing so is
\begin{align}
    p_{\text{pass}} = \frac{1}{d_Q} \sum_s \bra{\psi^s} \rho^s \ket{\psi^s}_{MM'} \: .
\end{align}
Here, $\rho^s_{MM'}$ is the density matrix describing the state prepared by the provers, and $\ket{\psi^s}$ is the state
\begin{align}
    \ket{\psi^s}_{MM'} = (\mathbf{U}^x_{QAL\rightarrow M_aM_a'}\otimes \mathbf{U}^y_{BR\rightarrow M_bM_b'}) \ket{\Psi}_{LR}\ket{00}_{AB} \ket{s}_Q \: .
\end{align}
The states $\rho^{s}_{MM'}$ and $\ket{\psi^s}_{MM'}$ each obey constraints, which will combine to mean the success probability $p_{\text{pass}}$ is small. 

The constraint on $\rho_{MM'}^{s}$ is that, because only prover $P$ (who prepares $M$) is given the random bits, $\rho_{M'}^s=\sigma_{M'}$ is independent of $s$. 

The constraint on $\ket{\psi^s}_{MM'}$ is that the $\psi^s_{M'}$ have nearly orthogonal support. Intuitively, this occurs because of lemma \ref{lemma:complementaryCDS}: since $s$ is not stored in $M$, it is stored in the purifying system $M'$. 
Since $s$ is stored in $M'$ the density matrices on $M'$ must be distinguishable for different values of $s$. 
In appendix \ref{sec:QAMlowerbounddetails} we make this precise, showing that
\begin{align}\label{eq:orthogonalFs}
    \forall s\neq s',\,\,\, F(\psi^s_{M'}, \psi^{s'}_{M'}) \leq 4\sqrt{\delta'} \: .
\end{align}
We then bound the passing probability by
\begin{align}
    p_{\text{pass}} = \frac{1}{d_Q} \sum_s \bra{\psi^s} \rho^s \ket{\psi^s}_{MM'} &= \frac{1}{d_Q} \sum_s F(\psi^s_{MM'}, \rho^s_{MM'}) \nonumber \\
    &\leq \frac{1}{d_Q} \sum_s F(\psi^s_{M'}, \sigma_{M'}) \: .
\end{align}
where in the last line we used that the fidelity increases under the partial trace.
The provers can prepare an arbitrary density matrix on $M'$, so we need to consider the maximum over $\sigma_{M'}$. 
We now use the bound \ref{eq:handyFbound}, which constrains a maximization of this form. 
Using that, we have
\begin{align}
    p_{\text{pass}} &  \leq \max_{\sigma_{M'}} \frac{1}{d_Q} \sum_s F(\psi^s_{M'}, \sigma_{M'}) \nonumber \\
    &\leq \max_{\sigma_{M'}} \frac{1}{d_Q} \sum_s \sqrt{F}(\psi^s_{M'}, \sigma_{M'}) \nonumber \\
    &\leq \frac{1}{d_Q}\sqrt{\sum_{s,s'} \sqrt{F}(\psi^s_{M'}, \psi^{s'}_{M'})} \nonumber \\
    &\leq \frac{1}{d_Q}\sqrt{\sum_{s,s'} (\delta_{ss'}+2(\delta')^{1/4})} \nonumber \\
    &\leq  \frac{1}{d_Q}\sqrt{d_Q+2d_Q^2(\delta')^{1/4}} \nonumber \\
    &= \sqrt{\frac{1}{d_Q}+2(\delta')^{1/4}} \:.
\end{align}
We can recall that $d_Q=2^k$ and $\delta'= \delta 2^{-k}$, so that
\begin{align}
    p_{\text{pass}} \leq \sqrt{\frac{1}{2^k}+\delta 2^{-k/4}} \: .
\end{align}
Since this goes to zero at large $k$, we can choose $k$ to be a large enough constant so that $p_{\text{pass}}\leq \delta_p$.
\end{proof}

\section{Classical-quantum separations}

\subsection{Separating perfectly correct CDS and CDQS}\label{sec:perfectcorrectseparation}

In this section we investigate if quantum resources can provide advantages in implementing the conditional disclosure of secrets primitive. 
In section \ref{sec:perfectcorrectseparation} we show an unconditional separation in the setting of perfectly correct CDS.
Our approach is similar to \cite{kawachi2021communication}, who prove a separation in the perfectly correct and perfectly secure setting for PSM --- we relax the perfect security requirement and adapt this to CDS. 

We will prove a separation for the not-equals function, which recall is defined by
\begin{align}
    \text{NEQ}_n(x,y) = \begin{cases}
        0 \qquad x=y \\
        1 \qquad x\neq y
    \end{cases}
\end{align}
where $x,y$ are $n$ bit inputs. 
We work in a promise setting where either $x=y$ or $x$ and $y$ differ in exactly 
$n/2$ locations. 
To get a separation, we need a lower bound on the classical setting and an upper bound on the quantum setting. 

\vspace{0.2cm}
\noindent \textbf{Classical lower bound:} Our lower bound for NEQ begins with a lower bound on perfectly correct CDS proven in \cite{applebaum2021placing}.
This lower bound is given in terms of the $\coNP^{\cc}$ complexity, which we define next. 

\begin{definition} \textbf{($\coNP^{\cc}$)} A $\coNP^{\cc}$ communication protocol for a function $f:X\times Y\rightarrow \{0,1\}$ is implemented by two parties, which we call Alice and Bob. 
Alice receives input $x\in \{0,1\}^n$ and Bob receives input $y\in \{0,1\}^n$. 
Both Alice and Bob are given a proof $w$.
They then independently decide to accept or reject. 
The $\coNP^{\cc}$ communication complexity of a function $f$, denoted $\coNP^{\cc}(f)$, is the smallest number $c\in \mathbb{N}$ such that:
\begin{itemize}
    \item For any input $(x,y)\in f^{-1}(0)$, there exists a witness $w\in \{0,1\}^c$ such that both Alice and Bob accept when given $w$. 
    \item For any input $(x,y)\in f^{-1}(1)$, there does not exist a witness $w$ such that both Alice and Bob accept when given $w$.
\end{itemize}
\end{definition}

Theorem 3 from \cite{applebaum2021placing} shows that
\begin{align}\label{eq:coNPlowerbound}
    \text{pcCDS}(f) \geq \left(\frac{1}{4}-o(1) \right)\coNP^{\cc}(f) - \log(n) \: ,
\end{align}
where the left hand side denotes the communication cost for perfectly correct CDS.
Considering the randomness cost of the CDS protocol instead, which we will denote $\text{pc}\overline{\text{CDS}}(f)$, we have more simply
\begin{align}
    \text{pc}\overline{\text{CDS}}(f) \geq \frac{1}{2} \coNP^{\cc}(f) \: .
\end{align}
These bounds were proven in the context of unrestricted inputs; an examination of their proof technique however shows that the reduction from perfectly correct CDS to the $\coNP^{\cc}$ setting holds input by input. 
Consequently, the above bounds are also true in the promise setting: the randomness cost to implement perfectly correct CDS for the function $f$ with a given promise on the inputs is lower bounded by the $\coNP^{\cc}$ complexity, considered with the same promise on the inputs. 

Next, we recall that the $\coNP^{\cc}$ complexity is the logarithm of the number of rectangles needed to cover the $0$ entries in the communication matrix \cite{Kushilevitz_Nisan_1996}, without covering any $1$ entries.
We will show that a zero rectangle cannot be too big for the NEQ problem with the given promise on the inputs. 
We use the following theorem from \cite{frankl1987forbidden}, letting
$\Delta(s,t)$ denote the Hamming distance between $s$ and $t$. 
\begin{theorem}\label{thm:0rectangletheorem}
    Let $n$ be divisible by $4$. Let $S,T\subseteq \{0,1\}^n$ be two families of $n$-bit vectors such that for every pair $s\in S, t\in T$ we have $\Delta(s,t)\neq n/2$. Then $|S|\times |T|\leq 2^{2\times 0.96n}$.  
\end{theorem}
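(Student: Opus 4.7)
The plan is to derive this as a bipartite (cross-intersecting) version of the Frankl--R\"{o}dl forbidden-intersection theorem, whose proof proceeds via the spectral / Fourier-analytic method on the Boolean cube.

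First, I would reformulate the statement as a cross-independent-set problem on a Cayley graph. Identify $\{0,1\}^n$ with $\mathbb{F}_2^n$ and let $G$ be the Cayley graph on $\mathbb{F}_2^n$ with connection set $W = \{w \in \mathbb{F}_2^n : |w| = n/2\}$, which is $d$-regular with $d = \binom{n}{n/2}$. The hypothesis that $\Delta(s,t)\neq n/2$ for all $s\in S$, $t\in T$ is precisely the statement that $S$ and $T$ are cross-independent in $G$, i.e.\ no edge of $G$ runs between $S$ and $T$.

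Second, I would apply the expander mixing lemma. Because $G$ is a Cayley graph on $\mathbb{F}_2^n$, its eigenvectors are the Fourier characters $\chi_\alpha$ with eigenvalues $\lambda_\alpha = K_{n/2}(|\alpha|)$, where $K_k$ denotes the Krawtchouk polynomial. The trivial eigenvalue is $\lambda_0=d$; let $\lambda^\star = \max_{\alpha\neq 0}|\lambda_\alpha|$. The mixing lemma applied to cross-independent sets (equivalently the bipartite Hoffman bound) gives
\[
\sqrt{|S|\cdot|T|} \;\le\; 2^n \cdot \frac{\lambda^\star}{d}.
\]

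Third, I would upper-bound $\lambda^\star/d$ exponentially. Using the generating-function identity $K_{n/2}(w) = [z^{n/2}](1-z)^w(1+z)^{n-w}$, one shows that $|K_{n/2}(w)| \le \binom{n}{n/2}\cdot 2^{-cn}$ for every $1\le w\le n$ and some absolute constant $c>0$; the worst case is the critical weight at which the Krawtchouk polynomial is largest in absolute value among the nonzero eigenvalues. Combining with the mixing bound yields $|S|\cdot|T| \le 2^{2n(1-c)}$, and tracking the explicit constant $c\ge 0.04$ recovers the stated bound $2^{2\times 0.96\, n}$. The main obstacle is exactly this quantitative Krawtchouk estimate, which is the technical heart of \cite{frankl1987forbidden}; a conceptually cleaner alternative route uses the Bonami--Beckner hypercontractive inequality on $\{0,1\}^n$ to directly bound the nontrivial eigenvalues of $G$, producing the $2-\epsilon$ behaviour by either formulation.
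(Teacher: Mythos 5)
The paper does not actually prove this statement: it is quoted directly from Frankl and R\"odl's forbidden-intersection theorem \cite{frankl1987forbidden}, whose proof is a genuinely combinatorial argument, not a spectral one. Your proposal is therefore an attempt at a different route, and it contains a fatal gap at the third step.

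The claimed Krawtchouk estimate $|K_{n/2}(w)| \le \binom{n}{n/2}\,2^{-cn}$ for all $1\le w\le n$ is false. At $w=n$ one has $K_{n/2}(n)=(-1)^{n/2}\binom{n}{n/2}$, which for $4\mid n$ equals $+d$: the parity character $\chi_{1^n}$ evaluates to $+1$ on every generator of weight $n/2$, so the Cayley graph is disconnected, $\lambda^\star=d$, and the bipartite Hoffman/mixing bound is vacuous. (Concretely, if $S$ and $T$ each sit inside a single parity class, the contribution of the $\chi_{1^n}$ term can exactly cancel the main term in the mixing-lemma expansion.) Nor is this a removable technicality: even after quotienting out the parity character, i.e.\ working inside the even-weight subgroup, the surviving nontrivial eigenvalues include $K_{n/2}(2)=-\Theta\bigl(\tfrac{1}{n}\binom{n}{n/2}\bigr)$, so $\lambda^\star/d=\Omega(1/n)$ and the mixing lemma yields only $|S|\cdot|T|\le 2^{2n}/\mathrm{poly}(n)$, exponentially far from the required $2^{2\times 0.96 n}$. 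The weakness of eigenvalue (and SDP) bounds for exactly this graph is well known --- it is the reason the Frankl--R\"odl graph is a standard integrality-gap instance --- and it is precisely why the theorem needs the combinatorial argument of \cite{frankl1987forbidden} (or later proofs via sharp-threshold/junta machinery) rather than the expander mixing lemma or hypercontractivity alone. The simplest repair is to do what the paper does and cite the result.
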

This lets us establish the following lower bound. 
\begin{lemma}\label{thm:NEQlowerbound}
    Consider NEQ$_n(x,y)$ with $n$ divisible by $4$ and with the promise that either $x=y$ or $x$ and $y$ differ in exactly half their bits. 
    Then the $\textnormal{pcCDS}$ cost is $\Omega(n)$. 
\end{lemma}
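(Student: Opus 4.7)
The plan is to combine the $\coNP^{\cc}$ lower bound from \cite{applebaum2021placing} with the combinatorial bound of \cref{thm:0rectangletheorem} by a standard covering argument. Since the reduction from perfectly correct CDS to $\coNP^{\cc}$ in \cite{applebaum2021placing} holds input by input (as the authors of the excerpt emphasize), it suffices to prove $\coNP^{\cc}(\textnormal{NEQ}_n) = \Omega(n)$ under the promise that either $x=y$ or $\Delta(x,y)=n/2$.

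First I would recall the rectangle-cover characterization of $\coNP^{\cc}$: a $\coNP^{\cc}$ protocol of cost $c$ yields a cover of the $0$-entries of the communication matrix by at most $2^c$ combinatorial rectangles, none of which contains any $1$-entry. Under the promise, the $0$-entries are exactly the diagonal pairs $\{(x,x):x\in\{0,1\}^n\}$, so there are $2^n$ such entries in total. The $1$-entries are the pairs $(x,y)$ with $\Delta(x,y)=n/2$. Hence a $0$-rectangle $S\times T$ must satisfy the hypothesis of \cref{thm:0rectangletheorem}: no $s\in S,t\in T$ have $\Delta(s,t)=n/2$.

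Next I would bound the number of diagonal entries contained in any single $0$-rectangle $S\times T$. Since only pairs of the form $(x,x)$ with $x\in S\cap T$ lie on the diagonal, this number is at most $|S\cap T|\le \min(|S|,|T|)\le \sqrt{|S|\cdot|T|}\le 2^{0.96n}$ by \cref{thm:0rectangletheorem} and the AM--GM inequality. Therefore any cover of the $2^n$ diagonal entries requires at least $2^n/2^{0.96n}=2^{0.04n}$ rectangles, which gives $\coNP^{\cc}(\textnormal{NEQ}_n)\ge 0.04\,n$ in the promise setting. Plugging this into the input-by-input version of \eqref{eq:coNPlowerbound} yields
\begin{equation*}
\textnormal{pcCDS}(\textnormal{NEQ}_n)\ge \left(\tfrac14-o(1)\right)\cdot 0.04\,n - \log n = \Omega(n),
\end{equation*}
as desired.

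The only subtle step is justifying that the reduction from perfectly correct CDS to $\coNP^{\cc}$ survives the promise restriction, which (as noted in the excerpt) follows from inspecting the proof of Theorem~3 of \cite{applebaum2021placing}: the constructed $\coNP^{\cc}$ protocol only needs to be correct on the same inputs as the CDS protocol, so one may apply it directly to the promise version. Everything else is a routine double-counting using \cref{thm:0rectangletheorem}, so no further obstacle arises.
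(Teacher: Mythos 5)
Your proposal is correct and follows essentially the same route as the paper: both reduce to the $\coNP^{\cc}$ rectangle-cover characterization under the promise, bound the number of diagonal entries any $0$-rectangle can cover by $\sqrt{|S|\cdot|T|}\le 2^{0.96n}$ via \cref{thm:0rectangletheorem}, and conclude that $2^{\Omega(n)}$ rectangles are needed. Your version is slightly more explicit about invoking the input-by-input validity of the reduction and tracking the final constant, but the substance is identical.
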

\begin{proof}
Suppose $S\times T$ is a 0 rectangle. 
This means that for $s\in S$, $t\in T$, we have $\Delta(s,t) \neq n/2$. 
Then theorem \ref{thm:0rectangletheorem} above says that $|S|\times |T|\leq 2^{2\times 0.96n}$, so the size (area) of any $0$ rectangle is at most $2^{2\times 0.96n}$. 

On the other hand, suppose we can use less than $n/100$ bits of communication. 
Then letting $N$ be the number of rectangles in the resulting covering of the 0 entries, we have
\begin{align}
    \log N = \frac{n}{100} \: .
\end{align}
Let $s_{\text{max}}$ be the size of the largest $0$ rectangle. 
A rectangle of size $s$ can only cover $\sqrt{s}$ of the 0's on the diagonal, which means that the number of rectangles must be at least
\begin{align}
    N \geq \frac{2^{n}}{\sqrt{s_{\text{max}}}} \: ,
\end{align}
so then
\begin{align}
    \log \left(\frac{2^{n}}{\sqrt{s_{\text{max}}}}\right) \leq \frac{n}{100} \: .
\end{align}
Solving this for $s_{\text{max}}$, we find
\begin{align}
    2^{2\times 0.99 n} \leq s_{\text{max}} \: .
\end{align}
But earlier we found that $s_{\text{max}}$ must be smaller than the above, which is a contradiction. 
Thus, any such protocol must use more than $n/100$ bits of communication.
\end{proof}

\vspace{0.2cm}
\noindent \textbf{Quantum upper bound:} We next proceed to give a logarithmic upper bound using quantum resources. 
We exploit a solution from \cite{buhrman2010nonlocality} to the following distributed Deutsch-Jozsa problem. 
Alice and Bob receive inputs $x\in\{0,1\}^n$ and $y\in\{0,1\}^n$, with the above mentioned promise. 
Their goal will be to produce a pair of shorter strings $a,b$ which are equal if and only if $x=y$. 
The idea for our CDS protocol is to use $a,b$ as (shorter) inputs to a CDS protocol for NEQ. 

\begin{lemma}\label{thm:quantumupperbound}
    Consider NEQ$_n(x,y)$ with $n$ a power of $2$ and with the promise that either $x=y$ or $x$ and $y$ differ in exactly half their bits. 
    Then the $\textnormal{pc}\overline{\textnormal{CDQS}}$ cost, including both communication and shared entanglement, is $O(\log n)$.
\end{lemma}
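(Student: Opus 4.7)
The plan is to compose two standard ingredients: the distributed Deutsch--Jozsa protocol of \cite{buhrman2010nonlocality}, which under the given promise deterministically reduces the inputs $x, y \in \{0,1\}^n$ to $\log n$-bit strings $a, b$ satisfying $a = b$ iff $x = y$, together with a classical perfectly correct CDS for NEQ on $\log n$-bit inputs of $O(\log n)$ cost. Chaining these yields a $\CDQS$ protocol for $\textnormal{NEQ}_n$ of total cost $O(\log n)$.

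For the quantum reduction, Alice and Bob share the $2\log n$-qubit state $n^{-1/2}\sum_{i=1}^{n} \ket{i}_A \ket{i}_B$. Alice applies the diagonal phase $\sum_i (-1)^{x_i}\ket{i}\!\bra{i}$ followed by $H^{\otimes \log n}$ and measures in the computational basis to obtain $a$; Bob does the analogous operations with $y$ to obtain $b$. A short amplitude calculation shows that the probability of outcome $(a,b)$ is proportional to $\left|\sum_{i=1}^n (-1)^{(x_i \oplus y_i) + i \cdot (a\oplus b)}\right|^2$, which under the promise forces $a = b$ with probability one when $x = y$ and $a \neq b$ with probability one when $\Delta(x,y) = n/2$. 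Moreover, by a quick Parseval check, the marginal distribution of each party's outcome is uniform on $\{0,1\}^{\log n}$ in both cases, and this quantum step involves no communication with the referee.

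For the classical CDS I would use a perfectly correct protocol for NEQ on $\log n$-bit inputs with $O(\log n)$ cost, built by running a constant-cost single-bit NEQ CDS on each coordinate and combining them with the standard CDS-for-OR construction (run all sub-protocols in parallel with independent randomness, each revealing the same secret $s$). The full $\CDQS$ protocol is then: Alice and Bob execute the distributed Deutsch--Jozsa step locally to compute $a$ and $b$, then run the classical CDS on inputs $(a,b)$, sending messages $m_A(a,s,r)$ and $m_B(b,r)$ to the referee.

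Perfect correctness is immediate from perfect correctness of each ingredient. For privacy in the case $x = y$, the Deutsch--Jozsa step produces $a = b$ with probability one, so the classical CDS is invoked on an input in $\textnormal{NEQ}^{-1}(0)$, and its simulator lifts to a $\CDQS$ simulator by first sampling a uniform $\tilde a \in \{0,1\}^{\log n}$ and then running the classical simulator on $(\tilde a, \tilde a)$. The main subtlety, and really the only quantum step in the argument, is this lift: it goes through cleanly because the quantum operations are entirely local to Alice and Bob, the marginal distribution of each party's measurement outcome is input-independent, and the only referee-visible data are the classical CDS messages. Counting resources gives $2\log n$ qubits of entanglement plus $O(\log n)$ qubits of communication (classical shared randomness being obtainable from a handful of additional EPR pairs), establishing $\pc\overline{\CDQS}(\textnormal{NEQ}_n) = O(\log n)$.
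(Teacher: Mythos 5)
Your proposal is correct and follows essentially the same route as the paper: the distributed Deutsch--Jozsa reduction of \cite{buhrman2010nonlocality} compressing $(x,y)$ to $\log n$-bit strings $a,b$ with $a=b$ iff $x=y$, followed by a classical perfect CDS for NEQ on $\log n$-bit inputs. The only cosmetic differences are that the paper simply cites \cite{applebaum2021placing} for the $O(\log n)$-cost classical NEQ protocol rather than building it from per-coordinate CDS plus OR-composition, and your explicit treatment of the simulator lift (using uniformity of the measurement marginals) is a welcome elaboration of a step the paper leaves implicit.
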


\begin{proof}
Alice and Bob share $\log n$ EPR pairs, and both apply controlled phase gates to prepare the state
\begin{align}
    \ket{\Psi_1} = \frac{1}{\sqrt{n}}\sum_{i} (-1)^{x_i+y_i} \ket{i}_A\ket{i}_B.
\end{align}
Next they both apply the Hadamard operation to obtain
\begin{align}
    \frac{1}{n\sqrt{n}} \sum_{a,b,i} (-1)^{x_i+y_i+i\cdot a + i\cdot b} \ket{a}_A\ket{b}_B .
\end{align}
Alice and Bob now both measure in the computational basis. 
The probability of obtaining any pair of outcomes $(a,b)$ such that $a=b$ is then 
\begin{align}
    \left|\frac{1}{n\sqrt{n}} \sum_{i}(-1)^{x_i+y_i} \right|^2.
\end{align}
This is $1/n$ when $x=y$, and (because of the promise) 0 otherwise. 
Thus Alice and Bob obtain strings $a, b$ of length $\log n$ which are always equal when $x=y$, and never equal otherwise. 

Now, Alice and Bob run a classical CDS protocol for the NEQ$_{\log(n)}$ function, with $a,b$ as their inputs. 
This can be implemented with linear (in $\log n$) communication and randomness \cite{applebaum2021placing} in the perfect setting, so they use $\log n$ communication and randomness plus the $\log n$ EPR pairs we used to shorten the inputs. 
\end{proof}

Lemmas \ref{thm:NEQlowerbound} and \ref{thm:quantumupperbound} together imply a quantum-classical separation for CDS in the case of perfect correctness. 
We can also notice that since the quantum upper bound does not introduce any soundness errors, the same observations separate pCDS and pCDQS, the versions of CDS and CDQS with both perfect correctness and security.

\subsection{Exponential separation of PSQM and PSM for a partial function}\label{sec:PSMseparations}

In this section we revisit the topic of separations for PSM in the robust (imperfect security and correctness) setting. 
In \cite{kawachi2021communication} the authors point out that there is a relational problem with an exponential classical-quantum separation in the robust case, and a separation for a partial function in the exact setting. 
Here we show that the exponential separation can be achieved for a partial function in the robust setting.

To show our separation, we use the following version of the Boolean Hidden Matching problem defined by Kerenidis and Raz~\cite{kerenidisraz}. 
The problem uses the notion of a perfect matching, which is an ordered list of $n/2$ pairs $(i,j)$, $i,j\in [n]$ such that each $i\in[n]$ occurs exactly once in the matching.  

\begin{definition}\label{def:BHM} The \textbf{Boolean Hidden Matching} (BHM) problem is defined by:
\begin{itemize}
    \item \textbf{Inputs:} Alice receives $x\in \{0,1\}^{2n}$ and Bob receives an ordered perfect matching $M$ on $[2n]$ and a string $w\in \{0,1\}^n$.
    \item \textbf{Output:} 1 if $Mx+w$ has Hamming weight at least $2n/3$, 0 if $Mx+w$ has Hamming weight less than $n/3$
\end{itemize}
We are promised that one of the output conditions is true.
$Mx$ refers to the $n$-bit string whose $k^{\text{th}}$ component is $x_i+x_j$, where $(i,j)\in M$ is the $k^{\text{th}}$ pair in the matching (in order). All operations are performed over $\mathbb{F}_2$. 
\end{definition}

\paragraph*{Classical Lower Bound:} BHM was previously used to give an exponential separation between one-way quantum and one-way randomized communication complexity \cite{GKKRW07}. In particular, theorem 4 in~\cite{kerenidisraz} implies that BHM requires $\Omega(\sqrt{n})$ communication in the classical one-way model, and hence in the classical simultaneous model as well. Finally, we observe that any PSM protocol also gives a classical simultaneous protocol and hence, any PSM for BHM requires $\Omega(\sqrt{n})$ communication. 

\paragraph*{Quantum Upper Bound.} The intuition for the PSQM for BHM is that, when Alice and Bob share entanglement, using local measurements there is a randomized reduction from the BHM problem to the inner product problem on $O(\log n)$ bits. We describe this in more detail below.

\begin{theorem}
    Considering the BHM problem given in definition \ref{def:BHM}, there is a protocol that computes this problem in the PSQM model using $O(\log n)$ shared EPR pairs and $O(\log n)$ classical communication.
\end{theorem}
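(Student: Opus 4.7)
The plan is to convert the entanglement-assisted one-way protocol for BHM into a PSQM protocol by first reducing BHM to an $O(\log n)$-bit inner product problem using shared entanglement, and then running a classical PSM for inner product.

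First, I would describe the entanglement-based reduction. Alice and Bob share the maximally entangled state $|\phi\rangle_{AB} = \frac{1}{\sqrt{2n}}\sum_{i=1}^{2n}|i\rangle_A|i\rangle_B$, which costs $\log(2n)$ EPR pairs. Alice applies $|i\rangle_A \mapsto (-1)^{x_i}|i\rangle_A$, then Hadamards her register and measures in the computational basis to obtain a uniformly random $a \in \{0,1\}^{\log 2n}$. Bob applies his matching-basis rotation $V_M$, defined by $V_M|i_k\rangle_B = \frac{1}{\sqrt{2}}(|k,0\rangle + |k,1\rangle)$ and $V_M|j_k\rangle_B = \frac{1}{\sqrt{2}}(|k,0\rangle - |k,1\rangle)$, and measures in the computational basis to obtain $(k^*, b^*)$ uniformly distributed in $[n]\times\{0,1\}$. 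A direct computation using this matching basis shows $b^* = (Mx)_{k^*} + (i_{k^*}\oplus j_{k^*})\cdot a \pmod{2}$. Setting $c := i_{k^*}\oplus j_{k^*}$ and $d := b^* \oplus w_{k^*}$, Alice holds $a \in \{0,1\}^{\log 2n}$ and Bob holds $(c, d) \in \{0,1\}^{\log 2n + 1}$ satisfying $a\cdot c \oplus d = (Mx+w)_{k^*}$, where $k^*$ is a uniformly random pair index.

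Second, I would apply a classical PSM to the reduced problem. The function $g(a,(c,d)) = a\cdot c\oplus d$ is essentially inner product on $O(\log n)$-bit inputs, which has a perfect PSM with $O(\log n)$ communication and randomness via standard constructions such as Feige--Kilian--Naor. After Alice and Bob reduce their inputs via the entanglement step and feed them into this PSM, the referee recovers a bit equal to $(Mx+w)_{k^*}$ for a freshly random $k^*$.

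Third, I would handle amplification and PSQM security jointly. A single-sample protocol has correctness error $\approx 1/3$ and is not yet a secure PSQM, because naive repetition leaks the exact Hamming weight $|Mx+w|$ via a multi-bit transcript, not only the binary value $f(x, y)$. The fix is to execute $R = O(1)$ parallel copies of the reduction from Step 1 and wrap them in a composite classical PSM whose function is the majority of the $R$ XOR-IP values. This composite function has input size $O(R\log n) = O(\log n)$ and lies in $\mathsf{NC}^1$, so it admits a PSM with $O(\log n)$ total communication and randomness by standard composition of a linear-function PSM with a small-input majority PSM. By Chernoff, the output agrees with $f(x,y)$ with high probability under the promise, and the distribution of the single output bit is within $o(1)$ total-variation distance of a Bernoulli distribution depending only on $f(x,y)$, giving PSQM security. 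The main obstacle is the identity $b^* = (Mx)_{k^*} + (i_{k^*}\oplus j_{k^*})\cdot a$, which is the computation that makes the whole reduction work; this follows from a short expansion of $V_M$ acting on the phase-encoded state $\frac{1}{\sqrt{2n}}\sum_i(-1)^{x_i+i\cdot a}|i\rangle_B$, after which only one value of $b^*$ has nonzero amplitude for each $k^*$.
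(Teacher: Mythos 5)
Your core reduction is the same as the paper's: share $\log(2n)$ EPR pairs, have Alice phase-encode $x$, have Bob project onto an edge of his matching, and use local Fourier/matching-basis measurements so that Alice ends with a string $a$ and Bob with an edge index $k^*$ and a bit satisfying $(Mx)_{k^*} = b^* + \langle i_{k^*}\oplus j_{k^*},\, a\rangle$; this collapses BHM to a single inner-product instance on $O(\log n)$ bits, which is then handed to the classical PSM for inner product. (The paper phrases Bob's step as projectors $E_{i,j}$ followed by Hadamards on both sides, but that is the same measurement.) Where you genuinely diverge is the final step: the paper stops after \emph{one} sample, accepting the resulting constant correctness error (about $1/3$ under the promise) and the constant simulation error of the single Bernoulli output bit, whereas you run $R=O(1)$ copies and wrap them in a composite PSM computing the majority of the $R$ inner products. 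Your observation that naive repetition would leak the Hamming weight of $Mx+w$ rather than only $f(x,y)$ is correct and is precisely why such a wrapper is needed if one wants to drive the errors down; this is a real subtlety the paper sidesteps rather than addresses. Two caveats on your version: the special linear-algebraic PSM for inner product does not directly give a PSM for majority-of-$R$-inner-products at cost $O(\log n)$ --- generic formula or branching-program based constructions in the style of Feige--Kilian--Naor give $O(\log^2 n)$ for this composite function, which still yields the exponential separation but not the literal $O(\log n)$ you claim; and for constant $R$ the output distribution is within $2^{-\Omega(R)}$ (a small constant, not $o(1)$) of a distribution depending only on $f(x,y)$. Neither caveat breaks the theorem, but both should be stated precisely.
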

\begin{proof}\,
Alice and Bob start off with $\log(2n)$ EPR pairs 
\begin{align}
\frac{1}{\sqrt{2n}}\sum_{i\in[2n]}\ket{i}_A\ket{i}_B \: .
\end{align}
Alice adds her input $x$ in the phase to produce
\begin{align}
    \frac{1}{\sqrt{2n}}\sum_{i\in[2n]}  (-1)^{x_i} \ket{i}_A\ket{i}_B \: .
\end{align}
Bob measures with $n$ projectors $E_{i,j} \equiv \ket{i}\bra{i}+\ket{j}\bra{j}$ for each edge $(i,j)\in M$ in his perfect matching. 
This gives him a random $(i,j)\in M$ and the state ignoring the normalization is 
\begin{align}
     (-1)^{x_i}\ket{i}_A\ket{i}_B + (-1)^{x_j}\ket{j}_A\ket{j}_B \: .
\end{align}
Now, Alice and Bob each apply the Hadamard gate to all their qubits to obtain 
\begin{align}
    \sum_{k,l\in[2n]}\left[(-1)^{\langle k+l,i\rangle + x_i} + (-1)^{\langle k+l,j\rangle +  x_j})\right]\ket{k}_A\ket{l}_B \: ,
\end{align}
where $\langle \cdot ,\cdot \rangle$ denotes the inner product over $\mathbb{F}_2$. The players then measure all their registers. 
Observe that $\ket{k}\ket{l}$ has non-zero amplitude if and only if $(-1)^{\langle k+l,i\rangle + x_i}=(-1)^{\langle k+l,j\rangle + x_j}$, in other words,
\begin{align}
    \langle k+l,i+j\rangle= x_i+x_j \: .
\end{align}
Recall the promise on the input that either $x_i+x_j+w_{i,j}=0$ for at least $2n/3$ many $(i,j)\in M$, or $x_i+x_j+w_{i,j}=1$ for at least $2n/3$ many $(i,j)\in M$.
From the above, we can replace $x_i+x_j$ by $\langle k+l,i+j\rangle$. 
Thus, the problem reduces to computing $\langle k+l,i+j\rangle + w_{i,j}$ and testing whether it is mostly 0 or mostly 1 for a uniformly random $(i,j)\sim M$. 
We now embed this into a single instance of the inner product function. 
Recall that $k$ belongs to Alice and $l,i,j,w$ belong to Bob. 
Thus, Bob can compute $b'=\langle l,i+j\rangle$, $b=i+j$ and the players need to compute $\langle k,b\rangle +b'+w_{i,j}$ which can be viewed as the inner product between $(k,1,1)$ and $(b,b',w_{i,j})$ where Alice knows $k\in [2n]$ and Bob knows $b\in[2n],b',w_{i,j}\in\{0,1\}$. 

Altogether, when Alice and Bob share entanglement, by performing local measurements, they can do a randomized reduction to an instance of the inner-product function on $O(\log n)$ bits (where the randomness is over Bob's measurement outcome $(i,j)\sim M$). Since this reduction only involves local measurements and no communication, the referee learns nothing. Lemma 3 from~\cite{cryptoeprint:2018/144} implies that the inner product function on $O(\log n)$ bits has a PSM of cost $O(\log n)$. Alice and Bob then execute this PSM for the inner product function on the inputs $(k,1,1)$ and $(b,b',w_{i,j})$ respectively and compute the desired inner product $\langle k,b\rangle + b'+w_{i,j}$ using $O(\log n)$ communication. The security of this protocol follows from the security of the PSM for inner-product~\cite{cryptoeprint:2018/144}.
\end{proof}

\section{An upper bound for forrelation}

In this section we give a logarithmic upper bound on CDQS for the forrelation problem, which we define more precisely below. 
The strategy combines techniques from non-local quantum computation (NLQC) and communication complexity. 

We first make some comments on non-local quantum computation. 
A non-local quantum computation is any process realized in the form shown in figure \ref{fig:non-localcomputation}. 
Typically, the goal of an NLQC is to implement a joint channel $\mathbfcal{N}_{AB}$ on two quantum systems $A$, $B$, with $A$ initially held by Alice and $B$ initially held by Bob, as shown in figure \ref{fig:local}. 
Alice and Bob each act locally on their systems (plus their portions of a shared entangled state), exchange one simultaneous round of quantum or classical communication, then act locally again. 
The overall transformation realized in this process should be (or should approximate) $\mathbfcal{N}_{AB}$. 

\begin{figure*}
    \centering
    \begin{subfigure}{0.45\textwidth}
    \centering
    \begin{tikzpicture}[scale=0.6]
    
    \draw[thick] (-1,-1) -- (-1,1) -- (1,1) -- (1,-1) -- (-1,-1);
    
    \draw[thick] (-3.5,-3) to [out=90,in=-90] (-0.5,-1);
    \draw[thick] (3.5,-3) to [out=90,in=-90] (0.5,-1);
    
    \draw[thick] (0.5,1) to [out=90,in=-90] (3.5,3);
    \draw[thick] (-0.5,1) to [out=90,in=-90] (-3.5,3);
    
    \node at (0,0) {$\mathbfcal{N}_{AB}$};
    
    \node at (0,-5) {$ $};
    
    \end{tikzpicture}
    \caption{}
    \label{fig:local}
    \end{subfigure}
    \hfill
    \begin{subfigure}{0.45\textwidth}
    \centering
    \begin{tikzpicture}[scale=0.4]
    
    \draw[thick] (-5,-5) -- (-5,-3) -- (-3,-3) -- (-3,-5) -- (-5,-5);
    \node at (-4,-4) {$\mathbfcal{V}^L$};
    
    \draw[thick] (5,-5) -- (5,-3) -- (3,-3) -- (3,-5) -- (5,-5);
    \node at (4,-4) {$\mathbfcal{V}^R$};
    
    \draw[thick] (5,5) -- (5,3) -- (3,3) -- (3,5) -- (5,5);
    \node at (4,4) {$\mathbfcal{W}^R$};
    
    \draw[thick] (-5,5) -- (-5,3) -- (-3,3) -- (-3,5) -- (-5,5);
    \node at (-4,4) {$\mathbfcal{W}^L$};
    
    \draw[thick] (-4.5,-3) -- (-4.5,3);
    
    \draw[thick] (4.5,-3) -- (4.5,3);
    
    \draw[thick] (-3.5,-3) to [out=90,in=-90] (3.5,3);
    
    \draw[thick] (3.5,-3) to [out=90,in=-90] (-3.5,3);
    
    \draw[thick] (-3.5,-5) to [out=-90,in=-90] (3.5,-5);
    \draw[black] plot [mark=*, mark size=3] coordinates{(0,-7.05)};
    
    \draw[thick] (-4.5,-6) -- (-4.5,-5);
    \draw[thick] (4.5,-6) -- (4.5,-5);
    
    \draw[thick] (4.5,5) -- (4.5,6);
    \draw[thick] (-4.5,5) -- (-4.5,6);
    
    \end{tikzpicture}
    \caption{}
    \label{fig:non-localcomputation}
    \end{subfigure}
    \caption{(a) Circuit diagram showing the local implementation of a channel $\mathbfcal{N}_{AB}$. (b) Circuit diagram showing the non-local implementation of the same channel. The operations $\mathbfcal{V}^L$, $\mathbfcal{V}^R$, $\mathbfcal{W}^L$, and $\mathbfcal{W}^R$ are quantum channels. The lower, bent wire represents an entangled state. }
    \label{fig:non-localandlocal}
\end{figure*}
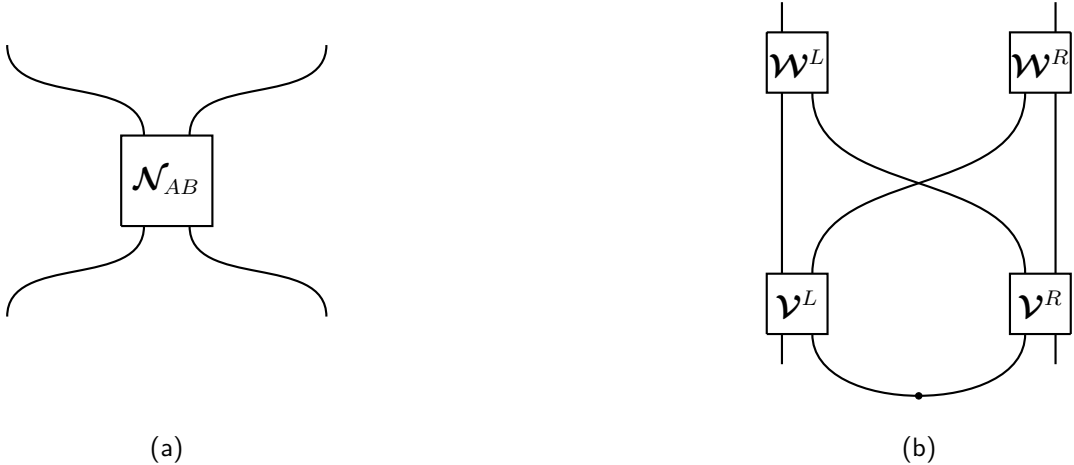

The key result from non-local quantum computation we will make use of is the following, reproduced from \cite{speelman2015instantaneous}. 
\begin{theorem}\label{thm:Tdepth}
    Any n-qubit quantum circuit $C_{AB}$ using the Clifford+$T$ gate set which has $T$-depth $d$ has a protocol for instantaneous non-local computation using $O((68n)^d)$ EPR pairs.
\end{theorem}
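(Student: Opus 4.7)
The plan is to prove the theorem by induction on the $T$-depth $d$, reducing at each step an instantaneous non-local protocol for a depth-$d$ circuit to one for a depth-$(d-1)$ circuit at the cost of an $O(n)$ multiplicative blow-up in EPR pairs. Iterating the recursion $d$ times starting from $O(n)$ EPR pairs for the Clifford base case will yield the stated $O((68n)^d)$ bound.

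For the base case $d=0$, the circuit is a Clifford $C$, and I would show that any such $C$ admits an instantaneous non-local implementation on $O(n)$ EPR pairs via standard gate teleportation: Alice and Bob teleport their input halves through $O(n)$ shared EPR pairs so that a single party can apply $C$ locally, then teleport the outputs back. Because Clifford conjugation sends Paulis to Paulis, the Pauli corrections arising from each teleportation transform to other Paulis under $C$, and the final correction is a classical function of the measurement outcomes which each party applies at the end. No intermediate communication is required, so the protocol is instantaneous.

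For the inductive step, I would factor the depth-$d$ circuit as $C_{\text{out}}\cdot T^{\otimes k}\cdot C_{\text{rest}}$, where $C_{\text{rest}}$ has $T$-depth $d-1$ and the outermost $T$-layer acts on some $k\le n$ wires. The plan is to run the depth-$(d-1)$ protocol for $C_{\text{rest}}$ (using $R_{d-1}$ EPR pairs by the induction hypothesis), and then implement the $T$-layer together with $C_{\text{out}}$ via gate teleportation. The key difficulty is that $T X T^\dagger = e^{-i\pi/4} SX$ is not a Pauli, so the unknown Pauli-frame correction $X^a$ left over from the inner protocol becomes an unknown $S^a$ Clifford correction after the $T$-layer, which does not propagate cleanly through $C_{\text{out}}$. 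The fix is to gate-teleport each $T$ using a resource state $(\id\otimes T)\ket{\Phi^+}$ and, rather than waiting for the resulting measurement outcomes, to coherently prepare ``branches'' indexed by the possible $S$-corrections and feed each branch through $C_{\text{out}}$ in parallel. Only one branch will be consistent with the classical outcomes learned at the end, and both parties can classically identify which branch is the live one. Amortizing this construction across the wires keeps the overhead to at most $O(n)$ EPR pairs per $T$-layer, with the explicit constant $68$ coming from the ancillae, phase-correction subcircuits, and Clifford-gadget bookkeeping.

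The main technical obstacle is the $T$-layer branching construction: one must arrange the resource state and local operations so that the $S^a$ pre-compensations can be prepared coherently, passed through (and absorbed by) the subsequent level of the protocol without any extra communication rounds, and so that the per-layer overhead is genuinely $O(n)$ rather than exponential in $n$. The other subtlety is verifying that the recursive composition preserves instantaneousness, i.e.\ that all deferred corrections from nested inductive calls can be reconciled by a single round of classical post-processing at the very end. Once the recursion $R_d \le 68 n \cdot R_{d-1}$ is established and unrolled with $R_0 = O(n)$, one obtains $R_d = O((68n)^d)$ as claimed.
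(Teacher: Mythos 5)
The paper does not prove this statement at all: it is imported verbatim from Speelman's work on instantaneous non-local computation of low $T$-depth circuits (\cite{speelman2015instantaneous}), and is used here as a black box. So there is no in-paper proof to compare against; I can only assess your sketch against the known argument, whose overall skeleton (induction on $T$-depth, Clifford base case via teleportation with Pauli-frame tracking, and the obstruction that $TXT^\dagger \propto SX$ turns Pauli keys into unknown $S$-corrections) you have correctly identified.

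The genuine gap is in the inductive step, and it is exactly the step that makes the theorem nontrivial. You propose to ``coherently prepare branches indexed by the possible $S$-corrections'' for the $T$-layer and assert that ``amortizing across the wires keeps the overhead to at most $O(n)$ EPR pairs per $T$-layer.'' But a $T$-layer acts on up to $n$ wires, each carrying an independent unknown key bit, so the branches you describe number $2^{\Theta(n)}$ per layer; preparing a copy of the downstream resources for each branch would give a cost doubly exponential in $d$ rather than $O((68n)^d)$. The actual argument resolves the corrections qubit-by-qubit using a garden-hose-style conditional-teleportation gadget: for each wire, the party who knows the key bit routes the (entire) state through one of two pre-rotated channels, so the correction for that wire is absorbed at the cost of re-teleporting the $n$-qubit state a constant number of times, and the fresh Pauli keys so introduced commute harmlessly through the next Clifford layer. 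It is this per-wire routing --- costing a constant multiple of the remaining protocol's resources for each of the $n$ wires --- that yields the multiplicative $O(n)$ per unit of $T$-depth and hence the recursion $R_d \le 68\,n\,R_{d-1}$. Your sketch asserts the recursion but does not supply the mechanism that avoids the exponential branching, so as written the bound does not follow. The base case and the observation that all deferred Pauli corrections can be reconciled in a single simultaneous classical round are fine.
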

A further comment is that the communication used in the protocol that realizes this upper bound has the same scaling as the entanglement cost. 

Next, we define the forrelation problem, for which we will give a CDQS upper bound using this NLQC technique. Let $n$ be a power of 2. 
Define the forrelation of a string $x\in\{-1,1\}^{n}$ to be
\begin{align}\label{eq:def_forr}
    \forr(x) := \frac{1}{n}\bra{x_1}H^{\otimes n}\ket{x_2}
\end{align}
where $x_1$ is the vector formed by first $n/2$ bits of $x$, and $x_2$ is vector formed by the final $n/2$ bits of $x$. This problem was defined by~\cite{Aar10,AA15,raztal} in the context of oracle separations between quantum and classical query complexity. Following this, \cite{girish2022quantum} defined a communication complexity version of this problem as follows.

\begin{definition}\label{def:forrelation}
Alice is given input $x\in\{-1,1\}^n$ and Bob is given $y\in \{-1,1\}^n$, with $n$ a power of 2. 
Then, to solve the \textbf{Forrelation} problem Alice should output the value $f(x,y)$ defined by
\begin{align}
    f(x,y) = \begin{cases}
        -1 \quad \text{if}\,\, \forr(x\cdot y)\geq \alpha \\
        +1 \quad \text{if}\,\, \forr(x\cdot y)\leq \beta
    \end{cases}
\end{align}
with $\alpha>\beta>0$ and $\alpha-\beta$ constant. Here, $x\cdot y$ denotes the point-wise product of $x$ and $y$.
\end{definition}

A variant of this problem with a $1/\log n$ gap was studied by~\cite{girish2022quantum} who showed that it has a quantum simultaneous protocol of $O(\log^3 n)$ cost when Alice and Bob share $O(\log^3 n)$ EPR pairs, but the classical randomized communication cost is $\tilde{\Omega}(n^{1/4})$. In this paper, we consider a constant gap. The advantage of this is that since the gap is constant, the quantum protocol only needs to be amplified a constant number of times, which turns out to be important later. On the other hand, it can be shown by combining~\cite{girish2022quantum} and~\cite{BS21}
that the classical hardness persists even with constant gap. 
 
In part, our interest in this problem is as a candidate problem to be outside of $\AM^{\cc}$. Indeed, the query problem of estimating $\forr(x)$ as in equation \ref{eq:def_forr} is known to be outside of $\AM$ in the query complexity world~\cite{raztal,BS21}. If the communication version of the Forrelation problem is indeed outside of $\AM^{\cc}$, then (as explained in the introduction) our upper bounds for this problem would give a quantum-classical separation for robust CDS. 

\begin{figure}
\centering 
	\mbox{ 
		\Qcircuit @C=1em @R=.7em {
			& \ket{0} &  & \gate{H} &  \multigate{7}{E} &  \qw & \multigate{3}{\mathcal{O_A}} & \qw & \multigate{7}{E}  & \qw & \targ &\ctrl{1}  &\ctrl{2} &\qw & \ldots& & \ctrl{3} & \gate{H} & \meter \qw  \\ 
			& \ket{0}  &  & \gate{H}& \ghost{E} & \qw  & \ghost{\mathcal{O_A}} & \qw  & \ghost{E} & \qw &\qw &  \gate{H} & \qw & \qw & \ldots& &\qw &\qw &    \\
			& \ket{0}  &  & \gate{H}&  \ghost{E} &\qw  & \ghost{\mathcal{O_A}} & \qw  & \ghost{E} & \qw &  \qw &\qw &  \gate{H} &\qw & \ldots& &\qw &\qw &    \\
			& \ket{0}  & & \gate{H} & \ghost{E} & \qw & \ghost{\mathcal{O_A}} & \qw & \ghost{E} & \qw  &  \qw &\qw &\qw &  \qw&  \ldots& &\gate{H} &\qw \\
			& \ket{0} &  & \qw& \ghost{E} &\qw & \multigate{3}{\mathcal{O_B}} & \qw & \ghost{E} & \qw \\ 
			& \ket{0}  &  &  \qw& \ghost{E} &\qw & \ghost{\mathcal{O_B}} & \qw  & \ghost{E} & \qw  \\
			& \ket{0} &  &  \qw& \ghost{E} &\qw & \ghost{\mathcal{O_B}} & \qw  & \ghost{E} & \qw  \\
			& \ket{0} &   &\qw  & \ghost{E} &\qw &  \ghost{\mathcal{O_A}} & \qw & \ghost{E} & \qw \\
		}
	}
\caption{Circuit computing the forrelation function $f(x,y)$. Figure reproduced from \cite{girish2022quantum}. Here, $E$ is the operator as in~\ref{fig:E}.}
\label{fig:forrcircuit}
\end{figure}
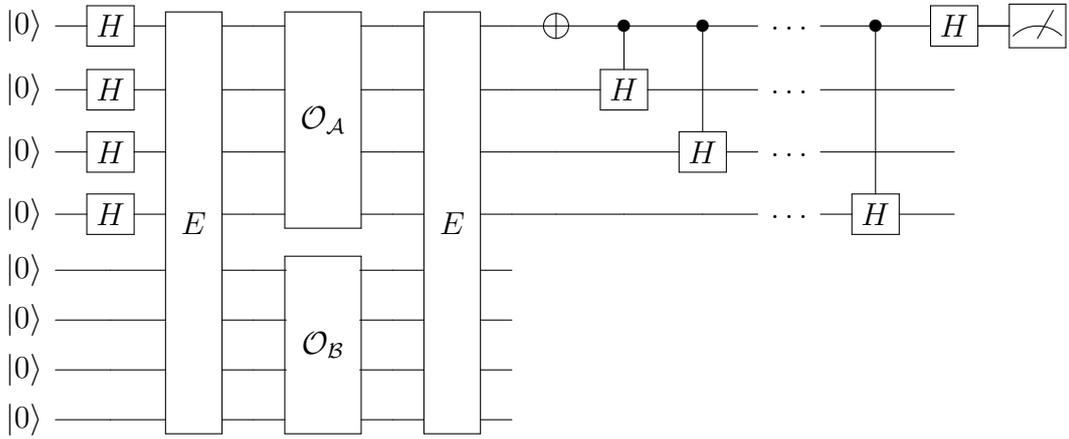

\begin{figure}[h!]
\centering
\mbox{ 
\Qcircuit @C=1em @R=.7em {
& \ctrl{3} & \qw & \qw & \ldots && \qw & \qw\\
& \qw & \ctrl{3} & \qw & \ldots && \qw &\qw \\
& \qw & \qw & \qw & \ldots&& \ctrl{3} &\qw\\
&  \targ & \qw & \qw & \ldots && \qw &\qw\\
&  \qw & \targ& \qw & \ldots&& \qw  &\qw \\
&    \qw & \qw& \qw & \ldots && \targ  &\qw\\
}
}
\caption{The $E$ operator}
\label{fig:E}
\end{figure}
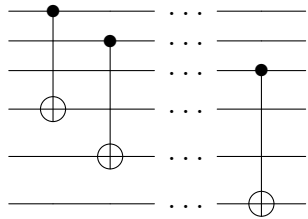

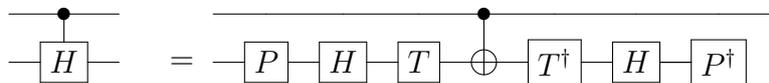
\begin{figure}[h!]
    \centering
    \mbox{ 
\Qcircuit @C=1em @R=.7em {
&\ctrl{1} &  \qw  & &   &  & \qw & \qw & \qw &\ctrl{1} & \qw  &\qw & \qw & \qw\\
&\gate{H} & \qw &  & = & & \gate{P}  & \gate{H} & \gate{T} &  \targ & \gate{T^\dagger} & \gate{H} & \gate{P^\dagger} \qw \\
}}
    \caption{Convenient decomposition of a controlled Hadamard operator as in figure 6 in~\cite{controlledhadamard}. Notice this can be expressed without any $T$ gates on the upper wire. This allows us to express the circuit of figure \ref{fig:forrcircuit} in constant $T$-depth. Figure reproduced from \cite{girish2022quantum}.}
    \label{fig:CH}
\end{figure}

We give our upper bound next. 
\begin{theorem}
    There exists a constant $c>0$ such that the forrelation problem given in definition \ref{def:forrelation} can be implemented in the $\textnormal{CDQS}$ setting with $O(\log^cn)$ communication and $O(\log^cn)$ randomness. 
\end{theorem}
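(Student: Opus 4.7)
The plan is to realize the forrelation circuit of Figure \ref{fig:forrcircuit} non-locally using Theorem \ref{thm:Tdepth}, then invoke the $f$-routing to CDQS reduction of \cite{allerstorfer2024relating} to obtain the claimed CDQS protocol.

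The first step is to observe that the joint portion of the forrelation circuit acts on $O(\log n)$ qubits and has constant $T$-depth in the Clifford$+T$ gate set. The qubit count is immediate, since every register in Figure \ref{fig:forrcircuit} has size $O(\log n)$. The oracles $\mathcal{O}_A$ and $\mathcal{O}_B$ depend only on Alice's and Bob's respective inputs and can therefore be absorbed into each party's local pre- and post-processing for the non-local computation, so they do not contribute to the $T$-depth counted by Theorem \ref{thm:Tdepth}. The remaining joint gates consist of the Clifford $E$-blocks and a final block of controlled-Hadamards sharing qubit 1 as a common control; by the decomposition of Figure \ref{fig:CH}, each controlled-Hadamard places its $T$ gates only on its target, so an entire layer of controlled-Hadamards with disjoint targets can be parallelized into $T$-depth $O(1)$. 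Thus the joint circuit has constant $T$-depth overall, and by the constant gap $\alpha-\beta$ in Definition \ref{def:forrelation} it distinguishes the two cases of forrelation with bias bounded below by a constant.

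The second step is to amplify the output bias above $1-0.09/2$ by running $O(1)$ parallel copies of the circuit followed by a small majority-vote subcircuit, which can itself be implemented in constant $T$-depth on $O(1)$ additional qubits. Appending a Clifford-only controlled-teleportation gadget conditioned on the resulting output bit then routes a secret qubit $Q$ between Alice and Bob based on the computed value of $f(x,y)$, yielding an $f$-routing circuit for forrelation that still has constant $T$-depth on $O(\log n)$ qubits. Applying Theorem \ref{thm:Tdepth} to this circuit produces a non-local implementation with $O((68\cdot O(\log n))^{O(1)}) = O(\mathrm{polylog}\, n)$ EPR pairs and matching communication; the remark following Theorem \ref{thm:Tdepth} then lets us turn this into a CDQS protocol via the $f$-routing to CDQS conversion of \cite{allerstorfer2024relating}, giving the claimed $O(\log^c n)$ shared entanglement and communication, already within the default $0.09$ error (with further amplification available from Theorem \ref{thm:amplify} at constant multiplicative cost).

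The main obstacle I expect lies in justifying the constant-$T$-depth claim in the first step: specifically, verifying that the oracles $\mathcal{O}_A,\mathcal{O}_B$, which appear sandwiched between joint Clifford $E$-blocks, can really be pushed entirely into local pre- and post-processing without inflating the $T$-depth visible to Theorem \ref{thm:Tdepth}, and likewise that the controlled-teleportation routing gadget can be fused into the circuit in $O(1)$ $T$-depth. A secondary concern is tracking the precise constant $c$ that arises after the amplification and routing overheads are combined with the exponent $d$ from Theorem \ref{thm:Tdepth}; once these structural details are pinned down, the remaining steps amount to a direct invocation of Theorem \ref{thm:Tdepth} and of the $f$-routing to CDQS equivalence.
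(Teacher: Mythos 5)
Your proposal is correct and follows essentially the same route as the paper: express the $O(\log n)$-qubit forrelation circuit of Figure \ref{fig:forrcircuit} in constant $T$-depth via the controlled-Hadamard decomposition of Figure \ref{fig:CH}, amplify $O(1)$ times using the constant gap, apply Theorem \ref{thm:Tdepth}, and convert to CDQS through the NLQC/$f$-routing connection of \cite{allerstorfer2024relating}. The obstacle you flag about the oracles being sandwiched between the $E$-blocks is resolved in the paper by folding the first Hadamard layer, the first $E$-block, and both oracle calls into the preparation of the input state $\ket{\psi_{xy}}$, which Alice and Bob produce by locally injecting input-dependent phases into $O(\log n)$ shared EPR pairs, so only the remaining (constant $T$-depth) portion of the circuit is performed non-locally.
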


\begin{proof}\,
The key observation is that there is a way to compute forrelation, labelled $f(x,y)$, in constant $T$-depth. 
We then leverage the connection between NLQC and CDQS, and the upper bounds in terms of $T$-depth for NLQC. 

From here we just need to observe that there is a circuit that computes $f(x,y)$ in constant $T$-depth. 
For this, we make use of the circuit given in \cite{girish2022quantum}, shown in figure \ref{fig:forrcircuit}. 
The circuit uses $\log n$ qubits. 
Further, we make use of the decomposition of the controlled $H$ gate shown in figure \ref{fig:CH}.
To implement this circuit non-locally, we view the first layer of $H$ gates, $E$, and the oracle calls as preparing the input state $\ket{\psi_{xy}}$. 
In other words, Alice and Bob share $O(\log (2n))$ EPR pairs and first introduce a phase into this shared state based their individual inputs to produce $\ket{\psi_{xy}}$. 
Then we take $U_{AB}$ to implement the remaining portions of the circuit. 
$E$ is Clifford, and $X$ is Clifford. 
Using the decomposition of the controlled $H$ gate in figure \ref{fig:CH} we can implement the cascading controlled H operators using just two layers of $T$ gates. 

The remaining steps to compute $f(x,y)$ are then to perform the measurement and amplify the outcome by repeating the circuit. 
We note that the measurement returns one bit, and to amplify we need to repeat only $O(1)$ times (since the gap $\alpha-\beta$ is constant) and take the majority. 
The circuit implementing majority acts on $O(1)$ qubits, so contributes at most $O(1)$ to the $T$-depth.
Thus the entire circuit is constant $T$-depth, and theorem \ref{thm:Tdepth} gives a polynomial (in the circuit size) upper bound in terms of both entanglement and communication.  
Since the circuit here has size $\log n$ for $n$ the number of input bits, the entire protocol is implemented with $\text{poly}(\log n)$ communication and entanglement. 
\end{proof}

\section{Discussion}

In this work we explored the differences and analogies between quantum and classical CDS. 
We've done so with a few goals in mind: to better understand the power of quantum resources in information-theoretic cryptography, to better understand classical CDS, and to better understand non-local quantum computation, of which quantum CDS can be understood as a special case. 

We established that indeed quantum resources can provide advantages for CDS by finding a separation for perfectly correct CDS. 
We also gave a novel protocol for forrelation, which suggests an advantage in the robust case as well. 
Exploring the analogies between lower bounds for classical and quantum CDS, we proved a lower bound on quantum CDS from $\QAM[2,2]^{\cc}$, a two-prover, two-message interactive proof setting. 
This is, so far, the closest analogue bound to the classical bound from $\AM^{\cc}$. 

One property of classical CDS for which we could find no quantum analogue is randomness sparsification \cite{applebaum2021placing}, which gives that the randomness cost of a CDS protocol never needs to be larger than the communication cost, up to possible logarithmic differences. 
In the quantum case we were unable to determine if this is also true. 
This seems closely related to the analogous problem in quantum communication complexity, where it is also unknown if entanglement larger than the communication can ever be helpful. 

A key open question in the study of classical CDS is to establish linear lower bounds for explicit functions in the robust setting, or to better understand obstructions to doing so.\footnote{See \cite{applebaum2023advisor} for some discussion around understanding obstructions.} 
One motivation for studying the quantum case is to bring a new perspective and set of tools to bear on this problem. 
Indeed, for the perfectly secure setting the quantum perspective provides a new lower bound \cite{asadi2024rank}. 
Our lower bound on robust CDQS in terms of one-way classical communication complexity reveals a weakness in this bound as applied to the classical case: somehow the bound does not see enough of the structure of a CDS protocol to distinguish between quantum and classical protocols. 
We hope further exploration of quantum lower bounds will yield insight into the difficult problem of finding good lower bounds on classical CDS.  

\subsection*{Acknowledgements}

We thank Henry Yuen and Tal Malkin for useful discussions.
AM and CW acknowledge the support of the Natural Sciences and Engineering Research Council of Canada (NSERC); this work was supported by an NSERC Discovery grant (RGPIN-2025-03966) and NSERC-UKRI Alliance grant (ALLRP 597823-24). UG is supported by an NSF award (CCF-232993) and LO is supported by an NSF Graduate Fellowship.
Research at the Perimeter Institute is supported by the Government of Canada through the Department of Innovation, Science and Industry Canada and by the Province of Ontario through the Ministry of Colleges and Universities. 

\appendix

\section{Proof details for the lower bound from \texorpdfstring{$\QAM[2,2]^{\cc}$}{TEXT}}\label{sec:QAMlowerbounddetails}

We provide details for the proof of theorem \ref{thm:QAMbound}.

\vspace{0.2cm}
\noindent \textbf{Details on correctness:} We give the calculation showing that $\epsilon$-correctness of the CDQS protocol gives $2\sqrt{\epsilon}$ correctness of the two-prover proof. 
Let 
\begin{align}
    \mathbfcal{V}_{M\rightarrow PQ}(\cdot) &= \mathbf{V}_{M\rightarrow PQ}(\cdot) \mathbf{V}_{M\rightarrow PQ}^\dagger \nonumber \\
    \mathbfcal{U}_{QAL\rightarrow M_AM_A'}(\cdot) &= \mathbf{U}_{QAL\rightarrow M_AM_A'}(\cdot) \mathbf{U}_{QAL\rightarrow M_AM_A'}^\dagger \nonumber \\
    \mathbfcal{U}_{BR\rightarrow M_BM_B'}(\cdot) &= \mathbf{U}_{BR\rightarrow M_BM_B'}(\cdot) \mathbf{U}_{BR\rightarrow M_BM_B'}^\dagger.
\end{align}
Note that from correctness of the CDQS protocol we have that there exists, for all $(x,y)\in f^{-1}(1)$, a channel $\mathbfcal{D}^{x,y}_{M\rightarrow Q}$ such that
\begin{align}\label{eq:diamondcorrectness}
    \left\Vert\mathbfcal{D}^{x,y}_{M\rightarrow Q}\circ \mathbfcal{N}^{x,y}_{Q\rightarrow M} - \mathbfcal{I}_{Q\rightarrow Q}\right\Vert_\diamond \leq \epsilon.
\end{align}
Using equation \eqref{eq:opunderdiamond}, we can also obtain that there exists an isometric extension of these channels which is close in operator norm. 
Since one isometric extension of $\mathbfcal{D}^{x,y}_{M\rightarrow Q}\circ \mathbfcal{N}^{x,y}_{Q\rightarrow M}$ is $\mathbfcal{V}_{M\rightarrow PQ}\circ (\mathbfcal{U}^x_{QAL\rightarrow M_AM_A'}\otimes \mathbfcal{U}^y_{BR\rightarrow M_BM_B'})\circ \Psi_{\varnothing\rightarrow LR}$ (where $\Psi_{\varnothing\rightarrow LR}$ prepares the state $\ket{\Psi}_{LR}$) and all isometric extensions are related by an isometry on the purifying system, we have that all isometric extensions of $\mathbfcal{D}^{x,y}_{M\rightarrow Q}\circ \mathbfcal{N}^{x,y}_{Q\rightarrow M}$ can be expressed in the form
\begin{align}
    \mathbfcal{S}_{PM'} \circ \mathbfcal{V}_{M\rightarrow PQ}\circ (\mathbfcal{U}^x_{QAL\rightarrow M_AM_A'}\otimes \mathbfcal{U}^y_{BR\rightarrow M_BM_B'})\circ \Psi_{\varnothing\rightarrow LR}
\end{align}
where $\mathbfcal{S}_{PM'}(\cdot) = \mathbf{S}_{PM'}(\cdot)\mathbf{S}^\dagger_{PM'}$ with $\mathbf{S}_{PM'}$ an isometry.
Further, isometric extensions of the identity channel must just append a state preparation, 
\begin{align}
    \mathbfcal{I}_Q &\rightarrow \mathbfcal{I}_Q\otimes \mathbfcal{W}_{\varnothing \rightarrow PM'}.
\end{align}
Now, we employ equation \eqref{eq:isometricextensionbound} to bound the diamond norm between these isometric extensions in terms of the diamond norm between the channels, which itself is bounded by $\epsilon$ from equation \ref{eq:diamondcorrectness}, obtaining
\begin{align}
&\inf_{\mathbfcal{S}, \mathbfcal{W}} \left\Vert\mathbfcal{S}_{PM'} \circ \mathbfcal{V}_{M\rightarrow PQ}\circ (\mathbfcal{U}^x_{QAL\rightarrow M_AM_A'}\otimes \mathbfcal{U}^y_{BR\rightarrow M_BM_B'})\circ \mathbf{\Psi}_{\varnothing\rightarrow ELR} -  \mathbfcal{I}_Q\otimes \mathbfcal{W}_{\varnothing \rightarrow PM'}\right\Vert_\diamond \\ &\leq 2\sqrt{\epsilon}.\nonumber 
\end{align}
Using isometric invariance of the diamond norm, we can rewrite this as
\begin{align}
    \inf_{\mathbfcal{W}} \left\Vert\mathbfcal{V}_{M\rightarrow PQ}\circ (\mathbfcal{U}^x_{QAL\rightarrow M_AM_A'}\otimes \mathbfcal{U}^y_{BR\rightarrow M_BM_B'})\circ \mathbf{\Psi}_{\varnothing\rightarrow LR} -  \mathbfcal{I}_Q\otimes \mathbfcal{W}_{\varnothing \rightarrow PM'}\right\Vert_\diamond \leq 2\sqrt{\epsilon}
\end{align}
and further as
\begin{align}
    \inf_{\mathbfcal{W}} \left\Vert\mathbfcal{I}_{QAB} \otimes \mathbf{\Psi}_{\varnothing\rightarrow LR} -  (\mathbfcal{U}^x_{QAL\rightarrow M_AM_A'}\otimes \mathbfcal{U}^y_{BR\rightarrow M_BM_B'})^\dagger \circ \mathbfcal{V}_{M\rightarrow PQ}^\dagger \circ \mathbfcal{W}_{\varnothing \rightarrow PM'}\right\Vert_\diamond \leq 2\sqrt{\epsilon}.
\end{align}
We have the provers begin with the state $\ket{\varphi^{x,y}}_{PM'}$ that is output by the optimizing $\mathbfcal{W}_{\varnothing \rightarrow PM'}$.
Now, we use the definition of the diamond norm and consider the input state $\ket{s}_Q \ket{00}_{AB}$ to find that
\begin{align}
    \left\Vert\ket{s}_Q\ket{00}_{AB}\ket{\Psi}_{LR} - (\mathbf{U}^x_{QAL\rightarrow M_AM_A'}\otimes \mathbf{U}^y_{BR\rightarrow M_BM_B'})^\dagger \circ \mathbf{V}_{M\rightarrow PQ}^\dagger\ket{\varphi^{x,y}}_{PM'} \right\Vert_1 \leq 2\sqrt{\epsilon}.
\end{align}
In terms of the fidelity this is, 
\begin{align} &F(\ket{s}_Q\ket{00}_{AB}\ket{\Psi}_{LR}, (\mathbf{U}^x_{QAL\rightarrow M_AM_A'}\otimes \mathbf{U}^y_{BR\rightarrow M_BM_B'})^\dagger \circ \mathbf{V}_{M\rightarrow PQ}^\dagger\ket{\varphi^{x,y}}_{PM'}\ket{s}_Q)\\
    &\geq 1-2\sqrt{\epsilon}
\end{align}
but also 
\begin{align}
    p_{accept} = |\bra{s}_Q\bra{00}_{AB}\bra{\Psi}_{LR} (\mathbf{U}^x_{QAL\rightarrow M_AM_A'}\otimes \mathbf{U}^y_{BR\rightarrow M_BM_B'})^\dagger \circ \mathbf{V}_{M\rightarrow PQ}^\dagger\ket{\varphi^{x,y}}_{PM'}\ket{s}_Q|^2
\end{align}
so that the probability of Alice accepting is at least $1-2\sqrt{\epsilon}$ for any choice of secret $s$, and hence also at least this when averaged over $s$. 
Choosing $k \geq \log (\epsilon/\epsilon_p)$ (a constant), we can amplify the protocol sufficiently to achieve the needed correctness parameter.

\vspace{0.2cm}
\noindent \textbf{Soundness:} Here we show equation \eqref{eq:orthogonalFs}, which expresses that the reduced density matrices $\psi^s_M$ are nearly orthogonal for distinct $s$. 
To make this precise, begin with lemma \ref{lemma:complementaryCDS} which gives that there exists $\mathbfcal{D}^{x,y}_{M'\rightarrow Q}$ such that
\begin{align}
    \left\Vert\mathbfcal{D}^{x,y}_{M'\rightarrow Q} \circ (\mathbfcal{N}^{x,y})^c_{Q\rightarrow M'}-\mathbfcal{I}_Q \right\Vert_\diamond \leq 2\sqrt{\delta}.
\end{align}
Acting on the input $\ketbra{s}{s}$, this gives
\begin{align}
    \left\Vert\mathbfcal{D}^{x,y}_{M'\rightarrow Q} \circ (\mathbfcal{N}^{x,y})^c_{Q\rightarrow M'}(\ketbra{s}{s}_Q)-\ketbra{s}{s}_Q \right\Vert_1 \leq 2\sqrt{\delta}.
\end{align}
Now consider
\begin{align}
    \left\Vert \mathbfcal{D}^{x,y}_{M'\rightarrow Q} \circ (\mathbfcal{N}^{x,y})^c_{Q\rightarrow M'}(\ketbra{s}{s}_Q)-\mathbfcal{D}^{x,y}_{M'\rightarrow Q} \circ (\mathbfcal{N}^{x,y})^c_{Q\rightarrow M'}(\ketbra{s'}{s'}_Q) \right\Vert_1.
\end{align}
Inserting $\ketbra{s}{s}-\ketbra{s}{s}+\ketbra{s'}{s'}-\ketbra{s'}{s'}$ and applying the reverse triangle inequality and triangle inequality, we obtain
\begin{align}
    \left\Vert\mathbfcal{D}^{x,y}_{M'\rightarrow Q} \circ (\mathbfcal{N}^{x,y})^c_{Q\rightarrow M'}(\ketbra{s}{s}_Q)-\mathbfcal{D}^{x,y}_{M'\rightarrow Q} \circ (\mathbfcal{N}^{x,y})^c_{Q\rightarrow M'}(\ketbra{s'}{s'}_Q) \right\Vert_1 &\geq \left\Vert\ketbra{s}{s} - \ketbra{s'}{s'}\right\Vert_1 - 2\sqrt{\delta} \nonumber \\
    &= 2(1-2\sqrt{\delta}). \nonumber 
\end{align}
But also, by monotonicity of the trace distance, 
\begin{align}
    \left\Vert\rho_{M'}^s- \rho_{M'}^{s'} \right\Vert_1 &= \left\Vert(\mathbfcal{N}^{x,y})^c_{Q\rightarrow M'}(\ketbra{s}{s}_Q)- (\mathbfcal{N}^{x,y})^c_{Q\rightarrow M'}(\ketbra{s'}{s'}_Q) \right\Vert_1 \nonumber \\
    &\geq \left\Vert\mathbfcal{D}^{x,y}_{M'\rightarrow Q} \circ (\mathbfcal{N}^{x,y})^c_{Q\rightarrow M'}(\ketbra{s}{s}_Q)-\mathbfcal{D}^{x,y}_{M'\rightarrow Q} \circ (\mathbfcal{N}^{x,y})^c_{Q\rightarrow M'}(\ketbra{s'}{s'}_Q) \right\Vert_1 \nonumber 
\end{align}
so that we obtain
\begin{align}
    \frac{1}{2}\left\Vert\rho_{M'}^s- \rho_{M'}^{s'} \right\Vert_1 \geq 1-2\sqrt{\delta}. 
\end{align}
Translating this to a bound on the fidelity via the Fuch's van de Graff inequality, we obtain
\begin{align}
    \forall s\neq s',\,\,\, F(\psi^s_{M'}, \psi^{s'}_{M'}) \leq 4\sqrt{\delta}
\end{align}
as needed. 

\bibliographystyle{unsrtnat}
\bibliography{biblio}

\end{document}